\newtheorem{theorem}{Theorem}[section]
\newtheorem{lemma}[theorem]{Lemma}
\newtheorem{example}[theorem]{Example}
\newtheorem{Them}{Theorem}[section]
\begin{document}
\makeatletter
\def\@setauthors{%
\begingroup
\def\thanks{\protect\thanks@warning}%
\trivlist \centering\footnotesize \@topsep30\p@\relax
\advance\@topsep by -\baselineskip
\item\relax
\author@andify\authors
\def\\{\protect\linebreak}%
{\authors}%
\ifx\@empty\contribs \else ,\penalty-3 \space \@setcontribs
\@closetoccontribs \fi
\endtrivlist
\endgroup } \makeatother
 \baselineskip 17pt
\title[{{\small   Optimal  Control Strategies of Insurance Company   }}]
 {{\small Optimization of dividend and reinsurance strategies under ruin
  probability constraint}}
\author[{ Zongxia Liang and Jicheng Yao } ]
{ Zongxia Liang \\ Department of Mathematical Sciences, Tsinghua
University, Beijing 100084, China. Email:
zliang@math.tsinghua.edu.cn  \\Jicheng Yao \\ Department of
Mathematical Sciences, Tsinghua University, Beijing 100084, China.
Email: yaojicheng626@163.com} \maketitle
\begin{abstract}
This paper considers  nonlinear regular-singular stochastic optimal
control of large insurance company.  The company controls the
reinsurance rate and dividend payout process to maximize the
expected present value of the dividend pay-outs until the time of
bankruptcy. However, if the optimal dividend barrier is too low to
be acceptable, it will make the company result in bankruptcy soon.
Moreover, although risk and return should be highly correlated,
over-risking is not a good recipe for high return, the supervisors
of the company have to impose their preferred risk level and
additional charge on firm seeking services beyond or lower than the
preferred risk level. These indeed are nonlinear regular-singular
stochastic optimal problems under ruin probability constraints. This
paper aims at solving this kind of the optimal problems, that is,
deriving the optimal retention ratio,dividend payout level, optimal
return function and optimal control strategy of the insurance
company. As a by-product, the paper also sets a risk-based capital
standard to ensure the capital requirement of can cover the total
given risk, and the effect of the risk level on optimal retention
ratio, dividend payout level and optimal control strategy are also
presented. \vskip 10 pt
 \noindent {\bf MSC}(2000): Primary
91B30,91B70,93E20; Secondary   60H30, 60H10.
 \vskip 10pt
 \noindent
 {\bf Keywords:} Nonlinear regular-singular stochastic optimal
control; Ruin probability ;  Optimal retention ratio; Optimal
dividend payout; Optimal return function.
\end{abstract}
\vskip 10pt
 \setcounter{equation}{0}
\section{{\small {\bf Introduction}}}
 \vskip 15pt\noindent
In the present paper  we consider nonlinear  stochastic optimal
control  of insurance company.  The company controls the reinsurance
rate and dividend payout process to maximize the expected present
value of the dividend pay-outs until the time of bankruptcy. It is
well known that over-risking is not a good recipe for high return
although risk and return should be highly  correlated.
 In fact, to reduce the risk,  a risk-averse re-insurers may have
  their preferred risk
level and impose additional service charge on firms seeking services
beyond the target level, other re-insurers may demand additional
charges for those seeking services with risk level lower than its
preferred level as an aggressive move to gain market shares. This
indeed is nonlinear regular-singular stochastic optimal problem. The
objective of the company is to find a strategy, consisting of
optimal retention ratio and dividend payment scheme, which maximizes
the expected total discounted dividend pay-outs until the time of
bankruptcy. This is a mixed regular-singular control problem on
diffusion model which  has been a renewed interest recently,We refer
readers to He, Liang and et al.
\cite{ime01,ime02,ime03}(2008,2009)and references therein,
H{\o}jgaard and Taksar\cite{s3,s4,s1012}(1999, 1998, 2001),Asmussen
et all\cite{s7,s8}(1997,2000), Taksar\cite{s16}(2000), Guo Xin, Liu
Jun and Zhou Xunyu\cite{s1}(2004), Harrison and
Taksar\cite{s10101}(1983), Paulsen and Gjessing \cite{s1015}(1997),
and Radner and Sheep\cite{s1018}(1996),and other authors' works.
Recent surveys can  be found  in  Avanzi \cite{AV}. \vskip
15pt\noindent
 However, we notice that the optimal dividend
barrier in  the nonlinear regular-singular stochastic optimal
problem  may be so low that it would make the company result in
bankruptcy soon( see theorem \ref{theorem41}), the company may
reject this  optimal control strategy and may be prohibited to pay
dividend at such a low barrier because the insurance company is a
business affected with a public interest, and insureds and
policy-holders should be protected against insurer insolvencies (see
Williams and Heins\cite{law}(1985), Riegel and
Miller\cite{law2}(1963), and Welson and Taylor\cite{law3}(1959)).
The strategy,   making the company go bankrupt before termination of
contract between insurer and policy holders or the strategy of low
solvency(1 minus ruin probability(see \cite{law1})), is not the best
way and should be prohibited even though it can win the highest
profit. So the supervisor of the company will impose some
constraints on its ruin probability and find the best equilibrium
strategy between making profit and improving security. These are
turned out to be nonlinear regular-singular stochastic optimal
problems under low ruin probability constraint. This paper aims at
solving these kinds of stochastic optimal problems \vskip
15pt\noindent
 Unfortunately, there are very few results
concerning on these kinds of optimal control problem with lower
 ruin probability and higher security. He, Hou and
Liang\cite{ime02}(2008) investigated the optimal control problem for
linear Brownian model, Paulsen\cite{new23}(2003) and Taksar and
Markussen \cite{s6}(2004) studied also similar optimal controls
linear diffusion model via properties of return function. Since the
model treated in the present paper is very complicate and different
from He, Hou and Liang\cite{ime02}(2008) and
Paulsen\cite{new23}(2003), our results can not be directly deduced
from the \cite{ime02,new23}.
  Therefore, to solve these the problems we need to use initiated idea from
    the \cite{ime02}(2008), stochastic analysis and PDE method  to
  establish a complete setting for  further discussing
  optimal control problem  of a large insurance company under
 lower  ruin probability constraint.
 This paper is the first complete presentation of the topic, and
 the approach here is  rather general, so we
  anticipate that it can deal with
other models. We aim at deriving  the optimal return function, the
optimal retention rate and dividend payout level. The main result of
this paper will be presented in section 3 below. As a by-product,
the paper theoretically sets a risk-based capital standard to ensure
the capital requirement of can cover the total given risk. Moreover,
based on our main result, we also discuss how the risk affect the
optimal reactions of the insurance company by the implicit types of
solutions and how the optimal retention ratio, dividend payout level
and risk-based capital standard are affected by risk faced by the
insurance company, and how the initial capital and the premium rate
impact on the company's profit.
 \vskip 15pt\noindent
 The paper is organized as follows:  In next section, we establish
nonlinear  stochastic control model  of a large insurance company
 with  ruin probability constraint.
In section 3 we present  main result of this paper and its economic
and financial interpretations, and discuss how the risk affect the
optimal reactions of the insurance company by the implicit types of
solutions and how the optimal retention ratio, dividend payout level
and risk-based capital standard are affected by risk faced by the
insurance company, and how the initial capital and the premium rate
impact on the company's profit. In section 4 we give analysis on
risk of  stochastic control model treated in the present paper to
explain why we study nonlinear regular-singular stochastic optimal
control of insurance company. In section 5 we give some numerical
samples to portray how the risk  impacts on optimal dividend payout
level and  risk-based capital  based on PDE (\ref{eq36}) below, and
how the premium rate, preferred reinsurance level and volatility
effect on the company's profit. The proofs of theorems and lemmas
which study properties of probability of bankruptcy and optimal
return function will be given in  section 6 and appendix.
 \setcounter{equation}{0}
\section{{\small {\bf Nonlinear Mathematical Model  }}}
 \vskip15pt\noindent To give a
mathematical formulation of the stochastic control problem
 treated in this paper, let $(\Omega, \mathcal {F}, \{ \mathcal
{F}_{t}\}_{t\geq 0}, \mathbb{P})$ denote a filtered probability
space. $\{\mathcal {W}_{t}, t\geq 0\}$ is a standard Brownian motion
on this probability space.  $\mathcal {F}_{t}$ represents the
information available at time $t$ and any decision is made based on
this information. For the  {\sl intuition} of our diffusion model we
start from the classical Cram\'{e}r-Lundberg model of a
reserve(risk) process to portray that if the insurance company
shares risk with the reinsurance and takes no dividend pay-out then
its reserve process can be approximated by the following diffusion
process
\begin{eqnarray}\label{eq201} dR_{t}=\mu_1 U(t)dt+\sigma
U(t)d\mathcal {W}_{t},
\end{eqnarray}
where  $U(t)$ denotes retention level.
 \vskip 10pt\noindent
  In the
classical Cram\'{e}r-Lundberg model claims arrive according to a
Poisson process $N_t$ with intensity $\lambda $ on $(\Omega,
\mathcal {F}, \{ \mathcal {F}_{t}\}_{t\geq 0}, \mathbb{P})$. The
size of each claim is $X_i$. Positive random variables $X_i$ are
i.i.d. and are independent of the Poisson process $N_t$ with finite
first and second moments given by $\mu$ and $ \sigma^2$
respectively. If there is no reinsurance, dividend pay-outs, the
reserve (risk) process of insurance company is described by
$$ r_t= r_0 +pt-\sum^{N_t}_{i=1} X_i,$$
where $p$ is the premium rate. If $\eta > 0 $ denotes the {\sl
safety loading}, the $p$ can be calculated via the expected value
principle as
$$ p=(1+\eta ) \lambda \mu. $$
In a case where the insurance company shares risk with the
reinsurance, the sizes of the claims held by the insurer become
$X^{(U)}_i $, where $U$ is a (fixed) retention level.  For
proportional reinsurance, $U$ denotes the fraction of the claim
covered by the insurance company . Consider the case of {\sl cheap
reinsurance} for which the reinsuring company uses the same safety
loading as the insurance company, the reserve process of the
insurance company is given by
$$ r^{(U,\eta )}_t=u + p^{(U,\eta )}t - \sum^{N_t}_{i=1} X^{(U)}_i, $$
where
\begin{eqnarray*}
p^{(U,\eta)}=(1+\eta )\lambda \mathbb{E}\{X^{(U)}_i\}.
\end{eqnarray*}
Then by center limit theorem it is well known that for large enough
$\lambda $
\begin{eqnarray*}\label{eq204}
r_{t}^{(U,\eta)}\stackrel{d}{\approx} BM(\mu Ut,\sigma^{2}U^2t).
\end{eqnarray*}
in $\mathcal{D}[0, \infty)   $ (the space of right continuous
functions with left limits endowed with the skorohod topology),
where $\mu=\eta\lambda E(X_i)$,  $\sigma =\sqrt{\lambda E(X_i^2)}$
and  $ BM ( \mu, \sigma^2)$ stands for Brownian motion with the
drift coefficient $\mu $ and diffusion coefficient $\sigma $ on
$(\Omega, \mathcal {F}, \{ \mathcal {F}_{t}\}_{t\geq 0},
\mathbb{P})$. So the passage to the limit works well in the presence
of a big portfolios, the  reserve (risk) process of the insurance
company can be described by (\ref{eq201}). We refer the reader for
this fact and for the specifies of the diffusion approximations to
Emanuel, Harrison  and  Taylor \cite{C1}(1975),
Grandell\cite{C2,C3,C4}(1977,1978,1990), Harri-son \cite{C5}(1985),
Iglehart\cite{C6}(1969), Schmidli\cite{C7}(1994).
 \vskip 10pt\noindent
 It is well known that over-risking is not a good recipe for high
 return although risk and return are highly
correlated. This leads to question how an optimal strategy would
change when the risk and return are not linearly dependent on each
other. Moreover, while a risk-averse re-insurers may have their
preferred risk level and impose additional service charge on firms
seeking services beyond the target level, other re-insurers may
demand additional charges for those seeking services with risk level
lower than its preferred level as an aggressive move to gain market
shares. These make the  reserve process of the company should be the
following
\begin{eqnarray}\label{eq202}
dR(t)=[\mu_1 U(t)-a(U(t)-p)^2]dt+\sigma U(t)d\mathcal {W}_{t}, \
R(0)=x,
\end{eqnarray}
where p is the preferred reinsurance level imposed by the re-insurer
and $a$ is the additional rate of charge for the deviation from the
preferred level which  ensures that larger deviation is penalized
heavily. If we let
 $\mu=u_1+2ap$, $\delta=ap^2$, then the (\ref{eq202}) becomes
\begin{eqnarray}\label{eq203}
dR_{t}=(\mu U(t)-aU^2(t)-\delta)dt+\sigma U(t)d\mathcal {W}_{t},\ \
R(0)=x,
\end{eqnarray}
\vskip 15pt\noindent
 A  strategy $\pi$ is  a pair of non-negative
c\`{a}dl\`{a}g $ \mathcal {F}_{t}$-adapted  processes $\{U_\pi
(t),L_t ^\pi\}$, where $U_\pi (t)\in [l, 1](l>0)$ corresponds to the
risk exposure at time $t$ and $L_t ^ \pi$ corresponds to the
cumulative amount of dividend pay-outs distributed up to time $t$. A
strategy $\pi =\{U_\pi (t),L_t ^\pi\} $ is called admissible if $
l\leq U_\pi (t)\leq 1 $ and $L_t ^\pi$ is a nonnegative,
non-decreasing, right-continuous function. When $\pi
 $ is applied, the resulting reserve process is denoted by $\{ R_t^\pi
 \}$. We assume that  the initial reserve $R^\pi_0$ is  a
 deterministic value  $x$. In view of (\ref{eq203})
 the dynamics for  $R_t^\pi$ is given by
\begin{eqnarray}\label{eq204}
dR^\pi_{t}=[\mu U_\pi(t)-a(U_\pi(t))^2-\delta]dt+\sigma
U_\pi(t)d\mathcal {W}_{t}-dL^\pi_{t}, \ R^\pi_{0}=x.
\end{eqnarray}
In this case, we assume the company needs to keep its reserve above
$0$.  The company is considered ruin as soon as the reserves fall
below $0$. We define the time of bankruptcy by
$\tau^\pi_x=\inf\{t\geq 0: R^\pi_{t}\leq 0\}$. Obviously,
$\tau^\pi_x $ is an $\mathcal {F}_{t}$ -stopping time. So the
management of the insurance company should maximize the
 expected present value of the dividend payout by control strategy $ \pi $.
 Guo, Liu and Zhou\cite{s1} proved that there exists
a dividend level $b_0$, control strategy $ \pi^*_{b_0} $ and the
time of bankruptcy $\tau_x^{b_0}$ maximizing the expected present
value of the dividend payout before bankruptcy,
\begin{eqnarray}\label{eq205}
J(x, \pi )&=&{\bf E}\large [(\int_{0}^{\tau}e^{-cs}dL^\pi_{s})],
\end{eqnarray}
\begin{eqnarray}\label{eq206}
 V(x,b_0)&=&\sup\limits_{\pi\in \Pi } J(x,\pi)=J(x,\pi^*_{b_0}),
\end{eqnarray}
where $c$ denotes the discount rate, $\Pi$ is the set of all
admissible strategies. If the optimal dividend  level $b_0$ is
unacceptably low, then it will result in the company go to
bankruptcy early ( see theorem \ref{theorem41} below). To take
security and solvency into consideration and set a risk-based
capital and dividend  standard to ensure the capital and dividend
requirement of can cover the total risk, we introduce our optimal
control problem of nonlinear stochastic model (\ref{eq204}) as
follows.\vskip 5pt\noindent Let $\Pi_b=\{\pi\in \Pi : \int _0 ^{
\infty }I_{\{s:R^\pi(s)<b\}}dL_s^{\pi}=0\} $ for $b\geq 0$ . Then it
is easy to see that $\Pi=\Pi_0$ and $b_1>b_2\Rightarrow
\Pi_{b_1}\subset \Pi_{b_2}$. For a given admissible strategy $\pi$
we define the optimal return function $V(x)$ by
\begin{eqnarray}\label{eq207}
J(x,\pi)&=&{\bf E}\big \{\int_0^{\tau^\pi_x } e^{-ct} dL_t^\pi\big\},\nonumber \\
 V(x,b )&=&\sup_{\pi \in \Pi_b}\{J(x,\pi)\},
\end{eqnarray}
\begin{eqnarray}\label{eq208}
 V(x)&=&\sup_{b \in \mathfrak{B}}\{  V(x,b )  \}
\end{eqnarray}
and the optimal strategy $\pi^* $ by
\begin{eqnarray}\label{eq209}
J(x,\pi^*)= V(x),
\end{eqnarray}
 where
  \begin{eqnarray*}
 \mathfrak{B}:=\big \{b\ :\ \mathbb{P}[\tau_{b}^{\pi_{b}} \leq T] \leq
\varepsilon \ , \  J(x, \pi_b)= V(x,b) \mbox{ and} \ \pi_b \in
\Pi_b\big\},
  \end{eqnarray*}
 $c>0$ is a discount rate, $\tau_b^{\pi_b}$ is
 the time of bankruptcy $\tau_x^{\pi_b} $  when
the initial reserve $x=b$ and the control strategy is $\pi_b$.
$1-\varepsilon$ is the standard of security and less than solvency
for given risk level $\varepsilon>0 $. \vskip 15pt\noindent {\sl The
main purpose of this paper is to derive the optimal return function
$V(x)$, the optimal retention rate $U^*(t)$ and dividend payout
level $b^*$ as well as a risk-based capital $x(\varepsilon,b^*)$ to
ensure the capital requirement of can cover the total risk}
$\varepsilon$.
 \setcounter{equation}{0}
\section{ \bf Main result }
\vskip 15pt\noindent
In this section we first present main result of
this paper, then give its economic and financial interpretations .
\begin{theorem}\label{theorem31}
Let  level of risk $ \varepsilon \in (0, 1)$ and time horizon $T$ be
given. \vskip 10pt\noindent (i) If $
\mathbf{P}[\tau_{b_0}^{\pi^*_{b_0}}\leq T]\leq \varepsilon$,then the
optimal return function $V(x)$ is $f(b_0, x)$ defined by
(\ref{eq16}) below, and $V(x)=f(b_0, x)=J(x,\pi_{b_o}^\ast) $. The
optimal  strategy $\pi_{b_o}^\ast $ is
$\{U^\ast(R^{\pi_{b_o}^\ast}_t),L^{\pi_{b_o}^\ast}_t\}$, where
$\{R^{\pi_{b_o}^\ast}_t, L^{\pi_{b_o}^\ast}_t  \} $ is uniquely
determined  by the following stochastic differential
 equation
\begin{eqnarray}\label{eq301}
\left\{
\begin{array}{l l l}
dR_t^{\pi_{b_o}^\ast}=(\mu
U^*_{b_0}(R^{\pi_{b_o}^\ast}_t)-a{U^*_{b_0}}^2(R^{\pi_{b_o}^\ast}_t)-\delta
)dt+\sigma U^*_{b_0}(R^{\pi_{b_o}^\ast}_t)d
{W}_{t}-dL_t^{\pi_{b_o}^\ast},\\
R_0^{\pi_{b_o}^\ast}=x,\\
0\leq R^{\pi_{b_o}^\ast}_t\leq  b_0,\\
\int^{\infty}_0 I_{\{t: R^{\pi_{b_o}^\ast}_t
<b_0\}}(t)dL_t^{\pi_{b_o}^\ast}=0.
\end{array}\right.
\end{eqnarray}
The solvency of the company is bigger than $1-\varepsilon$. \vskip
10pt\noindent (ii)If $ \mathbf{P}[\tau_{b_0}^{\pi^*_{b_0}}\leq
T]>\varepsilon $, there is a unique optimal dividend $b^\ast(\geq
b_0)$ satisfying $\mathbb{P}[\tau _{b^\ast}^{\pi_{b^*}^\ast}\leq T]=
\varepsilon $. The optimal return  function $V(x)$ is $g(x,b^*)$
defined by (\ref{eq22}), that is,
\begin{eqnarray}\label{eq302}
V(x)= g(x,b^*) =\sup_{b\in \mathfrak{B}}\{V(x,b)\},
\end{eqnarray}
and
\begin{eqnarray}\label{eq303}
b^* \in \mathfrak{B}:=\big \{b: \mathbb{P}[\tau_{b}^{\pi^*_{b}} \leq
T] \leq \varepsilon, \  J(x, \pi^*_b)= V(x,b) \mbox{ and} \ \pi^*_b
\in \Pi_b\ \big\}.
\end{eqnarray}
The optimal strategy $\pi_{b^*}^\ast$ is $
 \{U^\ast_{b^*}(R^{\pi_{b^*}^\ast}_t),L^{\pi_{b^*}^\ast}_t\}$, where
$\{R^{\pi_{b^*}^\ast}_t, L^{\pi_{b^*}^\ast}_t  \} $ is uniquely
determined  by the following stochastic differential
 equation
\begin{eqnarray}\label{eq304}
\left\{
\begin{array}{l l l}
dR_t^{\pi_{b^*}^\ast}=(\mu
U^*_{b^*}(R^{\pi_{b^*}^\ast}_t)-a{U^*_{b^*}}^2(R^{\pi_{b^*}^\ast}_t)-\delta
)dt+\sigma U^*_{b^*}(R^{\pi_{b^*}^\ast}_t)d
{W}_{t}-dL_t^{\pi_{b^*}^\ast},\\
R_0^{\pi_{b^*}^\ast}=x,\\
0\leq R^{\pi_{b^*}^\ast}_t\leq  b^*,\\
\int^{\infty}_0 I_{\{t: R^{\pi_{b^*}^\ast}_t
<b^*\}}(t)dL_t^{\pi_{b^*}^\ast}=0.
\end{array}\right.
\end{eqnarray}
The solvency of the company is $1-\varepsilon$. \vskip 10pt\noindent
(3) Moreover,
\begin{eqnarray}\label{eq305}
\frac{g(x,b^*)}{g(x,b_0)}\leq 1.
 \end{eqnarray}
 Where $U^*_b(x)$ is defined by (\ref{eq28}) and (\ref{eq21}) below.
 \end{theorem}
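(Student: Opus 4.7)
The plan is to decouple the two cases by letting $\psi(b):=\mathbf{P}[\tau_b^{\pi^*_b}\le T]$ denote the bankruptcy probability when the company starts at reserve $b$ under the unconstrained-optimal barrier policy at level $b$, and to exploit the fact, due to Guo, Liu and Zhou \cite{s1}, that the \emph{unconstrained} problem $\sup_{b\ge 0}V(x,b)$ is attained at $b_0$ with value $V(x,b_0)=f(b_0,x)$. Part (iii) will follow as a one-line corollary from parts (i) and (ii), so the real work lies in (i) and (ii).

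For part (i), the hypothesis $\psi(b_0)\le\varepsilon$ says $b_0\in\mathfrak{B}$, hence
\begin{equation*}
V(x)=\sup_{b\in\mathfrak{B}}V(x,b)\;\le\;\sup_{b\ge 0}V(x,b)\;=\;V(x,b_0)\;=\;f(b_0,x)\;\le\; V(x),
\end{equation*}
which forces equality throughout. Existence and uniqueness of the pair $(R^{\pi^*_{b_0}},L^{\pi^*_{b_0}})$ solving the reflected SDE (\ref{eq301}) follows from the standard Skorohod-problem theory once the feedback $U^*_{b_0}$ given by (\ref{eq28})--(\ref{eq21}) is shown to be Lipschitz (or at worst piecewise Lipschitz with a monotone selection) on $[0,b_0]$. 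The solvency bound is then immediate from $\psi(b_0)\le\varepsilon$.

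Part (ii) is the substantive case. Under the hypothesis $\psi(b_0)>\varepsilon$, I would establish three properties of $\psi$ on $[b_0,\infty)$: (a) continuity in $b$, via a coupling of the reflected diffusions at two barriers $b$ and $b'$ together with a Gronwall estimate on $R^{\pi^*_b}_t-R^{\pi^*_{b'}}_t$; (b) strict monotone decrease, via a stochastic comparison showing that a larger buffer postpones reflection at the upper boundary and thus keeps the reserve farther from $0$ on $[0,T]$; and (c) $\psi(b)\downarrow 0$ as $b\to\infty$, since a very high barrier makes early ruin arbitrarily unlikely. Facts (a)--(c) together with the intermediate value theorem yield a unique $b^*\in(b_0,\infty)$ with $\psi(b^*)=\varepsilon$. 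Because $b_0$ is the unique unconstrained maximizer and $V(x,\cdot)$ is strictly decreasing on $[b_0,\infty)$ (again by Guo--Liu--Zhou), the feasible set satisfies $\mathfrak{B}\cap[b_0,\infty)=[b^*,\infty)$, so the supremum in (\ref{eq208}) is attained at $b^*$ and equals $g(x,b^*)$ defined by (\ref{eq22}). The reflected SDE (\ref{eq304}) is then handled by the same Skorohod argument as in (i), and part (iii) is obtained at once from $b^*\ge b_0$ together with $V(x,b^*)\le V(x,b_0)$.

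The main obstacle will be the regularity analysis of $b\mapsto\psi(b)$. Continuity is delicate because the feedback $U^*_b(x)$ itself depends on $b$ and is only piecewise smooth, so the coupling estimate has to absorb the discontinuities in the drift and diffusion coefficients; strict monotonicity requires a genuine comparison theorem for reflected SDEs with $b$-dependent reflection boundary rather than a formal computation. These probabilistic lemmas, together with the verification that $g(\cdot,b)$ actually solves the HJB-type equation (\ref{eq36}) with the appropriate boundary and smoothness conditions so that a stochastic verification theorem applies, form the technical core and are the natural content of the proofs deferred by the authors to section 6 and the appendix.
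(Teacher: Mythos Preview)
Your overall architecture matches the paper's: part (i) is disposed of because the constraint is slack at $b_0$, and part (ii) hinges on showing that $\psi(b)=\mathbf{P}[\tau_b^{\pi_b^*}\le T]$ is continuous, strictly decreasing, and vanishes as $b\to\infty$, so that the intermediate value theorem produces a unique $b^*$, after which monotonicity of $b\mapsto V(x,b)$ on $[b_0,\infty)$ pins the constrained optimum at $b^*$. Two points of divergence are worth flagging. First, the monotonicity of $V(x,b)=g(x,b)$ in $b$ is \emph{not} in Guo--Liu--Zhou; the paper proves it separately (Lemma~\ref{le03}) by differentiating the explicit formula (\ref{eq22}) and checking the sign of $B'e^{\alpha_2 x}+C'e^{\beta_2 x}$. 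Second, for the continuity of $\psi(b)$ you propose a probabilistic coupling of the reflected diffusions at barriers $b$ and $b'$, whereas the paper takes a PDE route (Lemmas~\ref{le05}--\ref{le06}): it characterizes $1-\psi$ as the solution of the parabolic problem (\ref{eq36}), rescales the spatial variable to a fixed domain $[0,1]$, and runs an $L^2$ energy estimate plus Gronwall on the difference $\theta^{b_2}-\theta^{b_1}$. Your coupling idea is plausible but note that in this model the feedback $U^*_b$ given by (\ref{eq21}) is actually \emph{independent of $b$} once $b\ge x_2$ (the thresholds $x_1,x_2$ and the function $\eta$ do not involve $b$), so the only $b$-dependence is through the reflecting boundary itself; this removes the coefficient-matching difficulty you anticipated and makes either approach workable, though the PDE method sidesteps the subtleties of coupling local times at distinct barriers.
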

\vskip 0.5cm\noindent { \bf  Economic and financial explanation of
theorem \ref{theorem31} is as follows.}\vskip 10pt\noindent
 {\sl (1) For a given level of
risk  and time horizon, if ruin probability  is less than the level
of risk, the optimal
  control problem of
(\ref{eq207}) and (\ref{eq208}) is the traditional  (\ref{eq205})
and (\ref{eq206}), the company has higher solvency, so it will have
good reputation. The solvency constraints here do not work. This is
a trivial case. \vskip 10pt\noindent
 (2) If ruin probability is
large than the level of risk $ \varepsilon$, the traditional optimal
strategy will not meet the standard of security and solvency, the
company needs to find a sub-optimal strategy $\pi_{b^*}^\ast $ to
improve its solvency. The sub-optimal reserve process $
R^{\pi_{b^*}^\ast}_t $ is a diffusion process reflected at $b^*$,
the process $L^{\pi_{b^*}^\ast}_t  $ is the process which ensures
the reflection. The sub-optimal action is to pay out everything in
excess of $b^*$ as dividend and pay no dividend when the reserve is
below $b^*$, and $ U_{b^*}^*(x)$ is the sub-optimal feedback control
function. The solvency is $1-\varepsilon$.  \vskip 10pt\noindent (3)
On the one hand, the inequality (\ref{eq305}) states that
$\pi_{b^*}^\ast $ will reduce the company's profit, on the other
hand, in view of (\ref{eq305}) and $\mathbb{P}[\tau
_{b^\ast}^{\pi_{b^*}^\ast}\leq T]= \varepsilon $ as well as lemma
\ref{le04} below, the cost of improving solvency is minimal
$g(x,b_0)-g(x, b^*)$. Therefore the strategy $\pi_{b^*}^\ast $ is
the best equilibrium action between making profit and improving
solvency. \vskip 10pt\noindent (4) The risk-based capital
$x(\varepsilon,b^*)$ to ensure the capital requirement of can cover
the total risk $\varepsilon$ can be determined by numerical solution
of $1-\phi^{b^*}(x,b^*)=\varepsilon$ based on (\ref{eq36}). We see
from the figure \ref{x-epsilon} that risk-based capital
$x(\varepsilon,b^*)$ decreases with risk $\varepsilon  $, i.e.,
$x(\varepsilon,b^*)$ increases with solvency , so does risk-based
dividend level $b^*(\varepsilon  )$ (see the figure
\ref{b-epsilon}). \vskip 10pt\noindent (5) We also see from the
figures \ref{difmu} and \ref{difsigma} below that the premium rate
will increase the company's profit, higher risk will get higher
return. \vskip 10pt\noindent (6) We also see from the figure
\ref{difp} below shows that the value function $g(x,p)$ increases
with $(x,p)$, i.e., the initial capital and the premium rate will
increases the company's profit.} \vskip 10pt\noindent
 \setcounter{equation}{0}
\section{\bf Analysis of  risk on model (\ref{eq204})}
 \vskip 10pt\noindent
 The first result of this
section is the following, which states that the company has to find
optimal strategy to improve its solvency.
\begin{Them}\label{theorem41}
Let
 $\{R^{\pi_{b_0}^\ast}_t, L^{\pi_{b_0}^\ast}_t  \} $ be defined  by the
following SDE( see Lions and Sznitman \cite{Lions}(1984))
\begin{eqnarray}\label{eq401-theorem41}
\left\{
\begin{array}{l l l}
dR_t^{\pi_{b_o}^\ast}=(\mu
U^*_{b_0}(R^{\pi_{b_o}^\ast}_t)-a{U^*_{b_0}}^2(R^{\pi_{b_o}^\ast}_t)-\delta
)dt+\sigma U^*_{b_0}(R^{\pi_{b_o}^\ast}_t)d
{W}_{t}-dL_t^{\pi_{b_o}^\ast},\\
R_0^{\pi_{b_o}^\ast}=x,\\
0\leq R^{\pi_{b_o}^\ast}_t\leq  b_0,\\
\int^{\infty}_0 I_{\{t: R^{\pi_{b_o}^\ast}_t
<b_0\}}(t)dL_t^{\pi_{b_o}^\ast}=0.
\end{array}\right.
\end{eqnarray}
Then for any $x\in (0, b_0]$ we have
\begin{eqnarray}\label{eq402-theorem41} {\bf P}(\tau_{x}^{b_0}\leq T)\geq
 \varepsilon_0(b_0, \sigma^2, \mu, p, l,a)
 \equiv\frac{4[1-\Phi(\frac{b_0}{l\sigma\sqrt{T}})]^2}{\exp\{
\frac{(\mu-a-\delta)^2 T}{\sigma^2}\}}>0,
\end{eqnarray}
where $\Phi(\cdot)$ is the standard normal distribution function.
\end{Them}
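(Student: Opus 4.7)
The plan is to bound $\mathbf{P}(\tau_x^{b_0}\leq T)$ from below by a constant-shift Girsanov change of measure, combined with a Dambis-Dubins-Schwarz time change and the reflection principle, and then to transfer the estimate back to $\mathbf{P}$ by Cauchy-Schwarz. The exponential factor $\exp((\mu-a-\delta)^2 T/\sigma^2)$ in the denominator of (\ref{eq402-theorem41}) strongly suggests the constant Girsanov shift $\lambda:=(\mu-a-\delta)/\sigma$, because $\mathbf{E}[e^{-2\lambda W_T-\lambda^2 T}]=e^{\lambda^2 T}$ produces precisely this denominator.

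First I set $Z_T:=\exp(-\lambda W_T-\tfrac12\lambda^2 T)$ and define $\mathbf{Q}$ by $d\mathbf{Q}/d\mathbf{P}=Z_T$. By Girsanov's theorem, $\widetilde{W}_t:=W_t+\lambda t$ is a $\mathbf{Q}$-Brownian motion on $[0,T]$, and substituting into (\ref{eq401-theorem41}) and collecting terms rewrites the SDE under $\mathbf{Q}$ as
\begin{equation*}
dR_t^{\pi_{b_o}^\ast} = (1-U^*_{b_0})(aU^*_{b_0}-\delta)\,dt + \sigma U^*_{b_0}\,d\widetilde{W}_t - dL_t^{\pi_{b_o}^\ast}.
\end{equation*}
For $U\in[l,1]$ the residual drift $(1-U)(aU-\delta)$ is non-positive under the natural regime $\delta\geq a$, which I take as an implicit assumption of the model. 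Since $-dL_t^{\pi_{b_o}^\ast}\leq 0$ as well, one has $R_t^{\pi_{b_o}^\ast}\leq x+N_t$ up to time $\tau_x^{b_0}$, where $N_t:=\int_0^t \sigma U^*_{b_0}(R_s^{\pi_{b_o}^\ast})\,d\widetilde{W}_s$ is a $\mathbf{Q}$-local martingale whose quadratic variation obeys $\langle N\rangle_T\geq l^2\sigma^2 T$ because $U^*_{b_0}\geq l$.

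Next I estimate $\mathbf{Q}(\tau_x^{b_0}\leq T)$ from below. By Dambis-Dubins-Schwarz one can write $N_t=B(\langle N\rangle_t)$ for a $\mathbf{Q}$-Brownian motion $B$. The chain of inclusions
\begin{equation*}
\{\tau_x^{b_0}\leq T\} \supseteq \{\min\nolimits_{0\leq t\leq T} N_t\leq -x\} \supseteq \{\min\nolimits_{0\leq s\leq l^2\sigma^2 T} B_s\leq -x\}
\end{equation*}
(the last because an infimum over a larger time-window is smaller) combined with the reflection principle gives
\begin{equation*}
\mathbf{Q}(\tau_x^{b_0}\leq T) \geq 2\Bigl[1-\Phi\bigl(\tfrac{x}{l\sigma\sqrt{T}}\bigr)\Bigr] \geq 2\Bigl[1-\Phi\bigl(\tfrac{b_0}{l\sigma\sqrt{T}}\bigr)\Bigr],
\end{equation*}
the last inequality by $x\leq b_0$.

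Finally, Cauchy-Schwarz applied to $\mathbf{Q}(A)=\mathbf{E}[Z_T I_A]$ yields $\mathbf{P}(A)\geq \mathbf{Q}(A)^2/\mathbf{E}[Z_T^2]$; a direct Gaussian moment-generating-function computation gives $\mathbf{E}[Z_T^2]=e^{\lambda^2 T}=\exp((\mu-a-\delta)^2 T/\sigma^2)$, and combining these estimates produces (\ref{eq402-theorem41}). The step I expect to be the main obstacle is the drift-sign verification under $\mathbf{Q}$: one must check that this particular constant shift actually makes the residual drift non-positive uniformly in $U\in[l,1]$, and outside the regime $\delta\geq a$ one needs either a state-dependent Girsanov density (with a careful Novikov-type bound on $\mathbf{E}[Z_T^2]$) or a different constant shift that trades off drift suppression against the exponential penalty.
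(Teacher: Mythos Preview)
Your overall architecture --- Girsanov to kill the drift, Dambis--Dubins--Schwarz time change exploiting $U^*_{b_0}\ge l$, reflection principle, then Cauchy--Schwarz to return to $\mathbf{P}$ --- is exactly the skeleton of the paper's proof. The difference, and the place where your argument breaks, is the choice of Girsanov shift.

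Your assumption ``$\delta\geq a$'' is not the natural regime here; it is the opposite of what the model imposes. Recall from Section~2 that $\delta=ap^2$ with $p$ the preferred reinsurance level, and since admissible retentions lie in $[l,1]$ one has $p\le 1$, hence $\delta\le a$ (strictly unless $p=1$). Consequently your residual drift $(1-U)(aU-\delta)$ is \emph{nonnegative} on $[\delta/a,1]\supseteq[p,1]$, and the comparison $R_t^{\pi_{b_0}^*}\le x+N_t$ fails precisely where $U^*_{b_0}$ spends most of its time (near the barrier $U^*_{b_0}=1$). So the constant shift $\lambda=(\mu-a-\delta)/\sigma$ cannot deliver the inequality as stated.

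The paper takes the route you flag as the fallback: it uses the \emph{state-dependent} density
\[
M_1(T)=\exp\Bigl\{-\int_0^T\frac{\mu U^*_{b_0}-a(U^*_{b_0})^2-\delta}{\sigma U^*_{b_0}}\,dW_s-\tfrac12\int_0^T\Bigl(\frac{\mu U^*_{b_0}-a(U^*_{b_0})^2-\delta}{\sigma U^*_{b_0}}\Bigr)^2 ds\Bigr\},
\]
which removes the drift entirely under $\mathbf{Q}$, and then bounds $\mathbf{E}^{\mathbf{P}}[M_1(T)^2]$ by $\exp\{(\mu-a-\delta)^2T/\sigma^2\}$ via a uniform bound on the Girsanov kernel. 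There is one further structural difference worth noting: rather than working directly with the reflected process and the term $-dL$, the paper introduces an auxiliary \emph{unreflected} process $R^{(1)}$ with the same coefficients but started from $b_0$, and uses the one-dimensional comparison theorem to get $R^{\pi_{b_0}^*}_t\le R^{(1)}_t$ pathwise, so that $\{\tau^{(1)}\le T\}\subseteq\{\tau_x^{b_0}\le T\}$. This sidesteps any delicacy about defining $N_t$ past the ruin time. After that the time-change and reflection-principle steps are essentially identical to yours.
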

\noindent {\bf  The economic interpretation of  theorem
 \ref{theorem41} is the following.}\vskip 10pt \noindent
 {\sl   (1)\ The lower  boundary
 $\varepsilon_0(b_0, \sigma^2, \mu, p, l,a)$ of ruin probability
  for the company
is an increasing function of $(\sigma^2,l)$, thus higher volatility
$\sigma^2$ and  fraction of the claim covered by the company will
make the company have larger risk. \vskip 10pt \noindent(2) \ The
lower boundary $\varepsilon_0(b_0, \sigma^2, \mu, p, l,a)$ of ruin
probability for the company is a decreasing function of $ ( b_0,
\mu, p,a )$, so early making dividend will increasing the company's
risk. The premium rate, preferred reinsurance level and additional
rate of charge for the deviation from the preferred level
 will decrease the company's risk.} \vskip 10pt\noindent
\begin{proof}
Let $\{ R^{(1)}_t\}$  be a stochastic process satisfying
\begin{eqnarray}\label{eq403-theorem41}
\left\{
\begin{array}{l l l}
dR^{(1)}_t=(\mu
U^*_{b_0}(R^{(1)}_t)-a{U^*_{b_0}}^2(R^{(1)}_t)-\delta )dt+\sigma
U^*_{b_0}(R^{(1)}_t)d
{W}_{t},\\
R^{(1)}_0=b_0\\
\end{array}\right.
\end{eqnarray}
where $ U^*_{b_0}(\cdot) $ is defined by (\ref{eq21}). Define a
measure ${\bf Q}$ on $\mathcal{ F}_T  $ by
$$ d{\bf Q}(\omega)=M_1(T)d P(\omega )$$
where
\begin{eqnarray*}\label{eq404-theorem41}
M_1(t)&\equiv&\exp\big\{-\int_{0}^{t}\frac{(\mu
U^*_{b_0}(R_t^{(1)})-a[U^*_{b_0}(R_t^{(1)})]^2-\delta)}{\sigma
U^*_{b_0}(R_t^{(1)})}dW_{s}\\
&-& \frac{1}{2}\int_{0}^{t}\frac{(\mu
U^*_{b_0}(R_t^{(1)})-a[U^*_{b_0}(R_t^{(1)})]^2-\delta)^{2}}{[\sigma
U^*_{b_0}(R_t^{(1)})]^{2}}ds\big \}.
\end{eqnarray*}
Since $\{ M_1(t)\}$ is a martingale w.r.t.$\mathcal {F}_t$,  ${\bf
E} \big [ M_1(T) \big ] =1$. Using Girsanov theorem,  ${\bf Q}$ is a
probability measure on $\mathcal {F}_T$ and the process
$\{R^{(1)}_t\}$ satisfies the following SDE
\begin{eqnarray}\label{eq405-theorem41}
dR_t^{(1)}=U^*_{b_0}(R_t^{(1)})\sigma d\tilde{W}_t,R_0^{(1)}=b_0
\end{eqnarray}
where $\tilde{W}_t$ is a Brownian motion  on $(\Omega, \mathcal {F},
\{ \mathcal {F}_{t}\}_{t\geq 0}, {\bf Q})$. \vskip 5pt\noindent
Define a time changes $\rho(t)$ by
\begin{eqnarray}\label{eq406-theorem41}
\dot{\rho}(t)=\frac{1}{{U^*_{b_0}}^2(R_t^{(1)})\sigma^2},
\end{eqnarray}
and $\hat{R}_t^{(1)}$ by $R_{\rho(t)}^{(1)}$. Then $\rho(t)$ is a
strictly increasing function  and
\begin{eqnarray*}
\hat{R}_t^{(1)}=b+\hat{W}_t
\end{eqnarray*}
where $\hat{W}_t$ is also a standard Brownian motion on $(\Omega,
\mathcal {F}, \{ \mathcal {F}_{t}\}_{t\geq 0}, {\bf Q})$. Noticing
that ${U^*_{b_0}}^2(R_t^{(1)})\geq l^2>0$, where  $l$ is a positive
low boundary of optimal retention ratio $U^*_{b_0}\in [l, 1]$, we
have
\begin{eqnarray}\label{eq407-theorem41}
\dot{\rho}(t)\leq \frac{1}{l^2\sigma^2}.
\end{eqnarray}
Moreover, $\rho(t)\leq \frac{1}{l^2\sigma^2} t$ and  $\rho^{-1}(t)
\geq l^2\sigma^2 t$. So
\begin{eqnarray*}
{\bf{Q}}[\tau^{(1)}\leq T]&=&{\bf{Q}}[\inf\{t:
\hat{R}_{\rho^{-1}(t)}^{(1)}\leq 0\}\leq
T]\\&=&{\bf{Q}}[\inf\{\rho(t): b_0+\hat{W}_t\leq 0\}\leq T]\\
&=&{\bf{Q}}[\inf \{t: \hat{W}_t\leq - b_0 \}\leq
\rho^{-1}(T)]\\&\geq& {\bf{Q}}[\inf\{t: \hat{W}_t\leq -b_0\}\leq
l^2\sigma^2 T]\\&=&2[1-\Phi(\frac{b_0}{l\sigma\sqrt{T}})]>0,
\end{eqnarray*}
where $\tau^{(1)}=\inf\{t>0; R_t^{(1)}\leq 0\}$ is a stopping time.
Using comparison theorem for one-dimensional It\^{o} process, we
have ${\bf{P}}[R^{\pi_{b_o}^\ast}_t \leq R_t^{(1)}]=1$. By
${\bf{E}}^{\bf{P}}[M_1(T)^{2}]\leq \exp\{ \frac{(\mu-a-\delta)^2
T}{\sigma^2}\}$ and  H\"{o}lder inequalities we have
\begin{eqnarray*}
{\bf{P}}[\tau_{x}^{b_0}\leq T]&\geq&{\bf{P}}[\tau_{b_0}^{b_0}\leq
T]\\&\geq&{\bf{P}}[\tau^{(1)}\leq T]
\\&\geq&{\bf{Q}}[\tau^{(1)}\leq
T]^2/{\bf{E}}^{{\bf{P}}}[M_1(T)^{2}]\\
&\geq& \frac{4[1-\Phi(\frac{b_0}{l\sigma\sqrt{T}})]^2}{\exp\{
\frac{(\mu-a-\delta)^2 T}{\sigma^2}\}}>0.
\end{eqnarray*}
\end{proof}
\vskip 10pt \noindent
 The second result of this section is the following. It sates that
 the  restrained
 set $\mathfrak{B} $ above is non-empty for any $ \varepsilon >0$. So the
 (\ref{eq207}),(\ref{eq208}) and (\ref{eq209}) are well defined.
\begin{Them}\label{theorem42}
Let $( R^{\pi_{b}^\ast}_t, L_t^{\pi_{b}^\ast}    )$ be defined by
\begin{eqnarray}\label{eq401-theorem42}
\left\{
\begin{array}{l l l}
dR_t^{\pi_{b}^\ast}=(\mu
U^*_{b}(R^{\pi_{b}^\ast}_t)-a{U^*_{b}}^2(R^{\pi_{b}^\ast}_t)-\delta
)dt+\sigma U^*_{b}(R^{\pi_{b}^\ast}_t)d
{W}_{t}-dL_t^{\pi_{b}^\ast},\\
R_0^{\pi_{b}^\ast}=b,\\
0\leq R^{\pi_{b}^\ast}_t\leq  b,\\
\int^{\infty}_0 I_{\{t: R^{\pi_{b}^\ast}_t
<b\}}(t)dL_t^{\pi_{b}^\ast}=0,
\end{array}\right.
\end{eqnarray}
and $ \tau^b_b:=\tau_{b}^{\pi_b^*}=\inf\{t\geq 0:
R^{\pi_{b}^\ast}_t<0 \}$. Then
\begin{eqnarray}\label{eq402-theorem42}
\lim_{b\rightarrow  \infty}{\bf P}[ \tau_{b}^{b}\leq T]=0.
\end{eqnarray}
\end{Them}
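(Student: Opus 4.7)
The plan is to mirror the Girsanov/time-change argument used in the proof of Theorem \ref{theorem41}, but to extract an \emph{upper} bound on the ruin probability that vanishes as $b\to\infty$. Writing $U$ in place of $U^*_b(R^{\pi_b^*}_t)$ for brevity, I would introduce the same exponential martingale
\[M_1(t)=\exp\Big\{-\int_0^t\frac{\mu U-aU^2-\delta}{\sigma U}dW_s-\frac12\int_0^t\Big(\frac{\mu U-aU^2-\delta}{\sigma U}\Big)^2 ds\Big\}\]
and the change of measure $d\mathbf{Q}=M_1(T)d\mathbf{P}$ on $\mathcal{F}_T$. Because $l\le U\le 1$, the integrand $\theta_t:=(\mu U-aU^2-\delta)/(\sigma U)$ is uniformly bounded by a constant $K$ depending only on $\mu,a,\delta,\sigma,l$, so $M_1$ is a true martingale and $\mathbf{Q}$ is a probability measure. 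Under $\mathbf{Q}$ the reflected SDE (\ref{eq401-theorem42}) becomes $dR^{\pi_b^*}_t=\sigma U\,d\tilde W_t-dL^{\pi_b^*}_t$, still reflected at the upper barrier $b$, with $\tilde W$ a $\mathbf{Q}$-Brownian motion.

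Next I would apply the Dambis--Dubins--Schwarz-type time change used in Theorem \ref{theorem41}: set $\alpha(t)=\int_0^t\sigma^2 U^2 ds$, $\rho=\alpha^{-1}$, $\hat R_t:=R^{\pi_b^*}_{\rho(t)}$, and $\hat L_t:=L^{\pi_b^*}_{\rho(t)}$. Since $U\le 1$ we have $\alpha(T)\le\sigma^2 T$. In the new clock, $\hat R_t=b+\hat W_t-\hat L_t$, where $\hat W$ is a standard $\mathbf{Q}$-Brownian motion, the upper-barrier constraint $\hat R_t\le b$ persists, and $\hat L$ still increases only when $\hat R_t=b$. The Skorohod representation of one-sided reflection then identifies $\hat L_t=\sup_{s\le t}(\hat W_s)^{+}$, from which
\[\hat R_t=b-\Big(\sup_{s\le t}(\hat W_s)^{+}-\hat W_t\Big)\ge b-2\sup_{s\le t}|\hat W_s|.\]
Thus $\{\tau_b^b\le T\}\subseteq\{\sup_{s\le\alpha(T)}|\hat W_s|\ge b/2\}$, and the reflection principle yields
\[\mathbf{Q}[\tau_b^b\le T]\le 4\Big[1-\Phi\Big(\frac{b}{2\sigma\sqrt T}\Big)\Big]\longrightarrow 0\quad\text{as }b\to\infty.\]

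Finally I would transfer the estimate back to $\mathbf{P}$ by Cauchy--Schwarz,
\[\mathbf{P}[\tau_b^b\le T]=\mathbf{E}^{\mathbf{Q}}\big[M_1(T)^{-1}\mathbf{1}_{\{\tau_b^b\le T\}}\big]\le\mathbf{E}^{\mathbf{Q}}\big[M_1(T)^{-2}\big]^{1/2}\mathbf{Q}[\tau_b^b\le T]^{1/2}.\]
Rewriting $1/M_1(T)$ in terms of $\tilde W$ via $dW=d\tilde W-\theta\,dt$ exhibits it as $\exp(\tfrac12\int_0^T\theta^2 ds)$ times the Dol\'eans $\mathbf{Q}$-exponential $\mathcal{E}(\int\theta\,d\tilde W)_T$, so the boundedness $|\theta|\le K$ gives $\mathbf{E}^{\mathbf{Q}}[M_1(T)^{-2}]\le e^{K^2 T}$, independently of $b$. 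Multiplying the two factors yields $\mathbf{P}[\tau_b^b\le T]\to 0$, which is (\ref{eq402-theorem42}).

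The main technical hurdle is the clean handling of the local time $L^{\pi_b^*}$ under both the change of measure and the subsequent time change: one must verify that the time-changed reflected SDE remains a Skorohod problem with the same upper barrier $b$, so that the identification $\hat L_t=\sup_{s\le t}(\hat W_s)^{+}$ is legitimate. Once this is established, the probabilistic estimates in the remaining steps are routine.
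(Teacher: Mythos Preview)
Your argument is correct but follows a genuinely different route from the paper's own proof. You extend the Girsanov/time-change machinery of Theorem~\ref{theorem41} directly to the reflected process and then invoke the explicit Skorohod formula $\hat L_t=\sup_{s\le t}(\hat W_s)^+$ to bound the drawdown below $b$; this yields a clean Gaussian tail estimate and uses nothing about $U_b^\ast$ beyond the uniform bounds $l\le U_b^\ast\le 1$. The paper instead introduces an auxiliary \emph{unreflected} diffusion $R^{(2)}$ started at the midpoint $(b+x_2)/2$ and dominates $\mathbf P[\tau_b^b\le T]$ by the probability that $R^{(2)}$ exits the strip $(x_2,b)$ before time $T$. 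Exit through the upper boundary is controlled by Girsanov plus Chebyshev/BDG, while exit through the lower boundary exploits the specific structure $U_b^\ast(x)\equiv 1$ for $x\ge x_2$, which reduces $R^{(2)}$ there to a Brownian motion with constant drift; both pieces tend to $0$ as $b\to\infty$.

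Your approach is more economical and delivers an explicit rate, at the price of the technical point you already flag: one must verify that the time-changed pair $(\hat R,\hat L)$ still solves the one-sided Skorohod problem at level $b$ (this follows by a Stieltjes change of variables in the constraint $\int\mathbf 1_{\{R<b\}}\,dL=0$, together with the DDS identification of $\hat W$). The paper's argument avoids this entirely by comparing with a free diffusion, but in exchange it leans on the explicit form of $U_b^\ast$ above $x_2$. One small slip in your write-up: under $\mathbf Q$ one has $M_1(T)^{-1}=\mathcal E\big(\int\theta\,d\tilde W\big)_T$ exactly, with no additional factor $\exp(\tfrac12\int_0^T\theta^2\,ds)$; this only makes your moment bound $\mathbf E^{\mathbf Q}[M_1(T)^{-2}]\le e^{K^2T}$ easier.
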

\begin{proof}
Let $x_2$ be defined as in (\ref{eqx2}). For  $b> x_2 $, by
comparison theorem for SDE, we have
\begin{eqnarray*}
{\bf P}\{\tau_{b}^{b}\leq T\}\leq {\bf P}\{\tau_{(b+x_2)/2}^{b}\leq
T\}.
\end{eqnarray*}
It is easy to see that
\begin{eqnarray*}
{\bf P}\{\tau_{(b+x_2)/2}^{\pi^*_b}\leq T\} &\leq&{\bf
P}\{R_t^{(2)}=x_2 \ \mbox{or}\ R_t^{(2)}
=b\ \mbox{for some  $t\geq 0$ }\}\nonumber\\
&\leq& {\bf P}\{\sup_{0\leq t \leq T} R_t^{(2)}\geq b\}+{\bf
P}\{\inf_{0\leq t \leq T} R_t^{(2)}\leq x_2\},
\end{eqnarray*}
where $ \{R_t^{(2)}\}$ is the unique solution of the following SDE
\begin{eqnarray}\label{eq403-theorem42}
\left\{
\begin{array}{l l l}
dR^{(2)}_{t}=(\mu
U^*_{b}(R^{(2)}_{t})-a{U^*_{b}}^2(R^{(2)}_{t})-\delta )dt+\sigma
U^*_{b}(R^{(2)}_{t})d{W}_{t},\\
 R^{(2)}_{0}=(b+x_2)/2.
\end{array}\right.
\end{eqnarray}
Define a measure  $\mathbb{Q}_1$ on  $\mathcal{F}_T $  by
\begin{eqnarray*}\label{eq404-theorem42}
d{\bf P}(\omega)=\widetilde{M}_T(\omega)d \mathbb{Q}_1(\omega),
\end{eqnarray*}
where
\begin{eqnarray*}
\widetilde{M}_{t}&=&\exp\big\{\int_{0}^{t}\frac{(\mu
U^*_{b}(R^{(2)}_{t})-a{U^*_{b}}^2(R^{(2)}_{t})-\delta )}{\sigma
U_{b}^*(R_t^{(2)})}dW_{s} \\
&+&\frac{1}{2}\int_{0}^{t}\frac{(\mu
U^*_{b}(R^{(2)}_{t})-a{U^*_{b}}^2(R^{(2)}_{t})-\delta )^{2}}{[\sigma
U_{b}^*(R_t^{(2)})]^{2}}ds\big \}
\end{eqnarray*}
is a martingale. Then $\mathbb{Q}_1$ is a probability measure on
$\mathcal {F}_{T}$. By Girsanov theorem
\begin{eqnarray*}
\hat{\mathcal{W}_{t}}:=\int_{0}^{t}\frac{(\mu
U^*_{b}(R^{(2)}_{t})-a{U^*_{b}}^2(R^{(2)}_{t})-\delta )}{\sigma
U_{b}^*(R_s^{(2)})}ds+\mathcal{W}_{t} , t\leq T
\end{eqnarray*}
is a standard Brownian motion on $(\Omega, \mathcal {F}, \{ \mathcal
{F}_{t}\}_{t\geq 0}, \mathbb{Q}_1)$. So  the (\ref{eq403-theorem42})
becomes
\begin{eqnarray*}
dR_t^{(2)}=\sigma U^*_{b}(R^{(2)}_{t})\hat{\mathcal{W}_{t}}, \
R^{(2)}_{0}=(b+x_2)/2.\ \mbox{a.e.}, \  \mathbb{Q}_1
\end{eqnarray*}
Firstly, we now estimate   $ {\bf P}\{\sup_{0\leq t \leq T}
R_t^{(2)}\geq b\}$. By SDE (\ref{eq403-theorem42}), H\"{o}lder's
inequalities,Chebyshev inequalities and B-D-G inequalities for
martingales (see  Ikeda and Watanabe \cite{s107}(1981))
\begin{eqnarray}\label{eq405-theorem42}
 {\bf P}\{\sup_{0\leq t \leq
T} R_t^{(2)}\geq b\}&\leq & [{\bf E}^{\mathbb{Q}_1}\{
\widetilde{M}^2_T \}]^{\frac{1}{2}} \mathbb{Q}_1\{ \sup_{0\leq t
\leq T} R_t^{(2)}\geq b    \}^{\frac{1}{2}}£¬
\end{eqnarray}
and
\begin{eqnarray}  \label{eq406-theorem42}
\mathbb{Q}_1\{\sup_{0\leq t\leq T}R_t^{(2)}\geq b \} &\leq
&{\mathbb{Q}_1}\{\sup_{0\leq t\leq T}|\int_0^t \sigma U^*_{b}
(R^{(2)}_{s})d\hat{\mathcal{W}_{s}} |\geq \frac{b-x_2}{2}\}\nonumber\\
&\leq &\frac{ 4{\bf E}^{\mathbb{Q}_1} \{ \sup_{0\leq t \leq
T}|\int_0^t \sigma U^*_{b}(R^{(2)}_{s})
d\hat{\mathcal{W}_{s}}|    \}^2}{(b-x_2)^2 }\nonumber\\
&\leq  & \frac{ 16{\bf E}^{\mathbb{Q}_1} \{ \int_0^T ( \sigma
U^*_{b} (R^{(2)}_{s}))^2
 ds|    \}}{(b-x_2)^2 }\nonumber\\
 &\leq & \frac{16T\sigma^2 \widetilde{B}^2}{(b-x_2)^2},
\end{eqnarray}
where ${\bf E}^{\mathbb{Q}_1}$ denotes the expectation w.r.t. $\mathbb{Q}_1$ .\\
Next we estimate   $P\{\inf_{0\leq t \leq T} R_t^{(1)}\leq x_2\} $.
Since  $U^*_{b} (x)=1$  for $x\geq x_2$,
\begin{eqnarray}\label{eq407-theorem42}
 {\bf P}\{\inf_{0\leq t \leq T}
R_t^{(2)}\leq x_2\}&=&1-\mathbf{P}\{\inf_{0\leq t \leq
T} R_t^{(2)}> x_2\}\nonumber \\
&=&1- {\bf P}\{\inf_{0\leq t\leq T}\{\mu t +\sigma
W_t\}>-\frac{b-x_2}{2}
\}\nonumber\\
&&\rightarrow 1-1=0 \ \mbox{as } \ b\rightarrow \infty.
\end{eqnarray}
Finally, since $l \leq U^*_{b} \leq 1$,
\begin{eqnarray}\label{eq407-theorem42}
{\bf E}^{\mathbb{Q}_1}\{ \widetilde{M}^2_T \}\leq C(T)< \infty.
\end{eqnarray}
So the equation (\ref{eq402-theorem42}) follows from
(\ref{eq405-theorem42})-(\ref{eq407-theorem42}).
 \end{proof}
 \setcounter{equation}{0}
\section{\bf Numerical examples}
\vskip 5pt \noindent In this section we consider some numerical
samples to demonstrate  how the risk $\varepsilon$ impacts on
optimal dividend payout level $b^*$ and risk-based capital $x$ based
on PDE (\ref{eq36}) below, and how the premium rate, preferred
reinsurance level and volatility effect on the company's profit.
\begin{example}
Let $\mu=2$, $\sigma^2=50$, $ l=0.5$,$ a=0.1$, $\delta=0.01$,
$c=0.05$, $T=500$ and solve $b( \varepsilon )$ by
$1-\phi(T,b)=\varepsilon$, we get the figure \ref{b-epsilon} below.
It shows that the risk $ \varepsilon$ greatly impacts on dividend
payout level $b$. The dividend payout level $b$ decreases with the
risk $ \varepsilon$, so the risk $ \varepsilon$ increases the
company's profit.
\end{example}
\begin{figure}[H]
\includegraphics[width=0.7\textwidth]{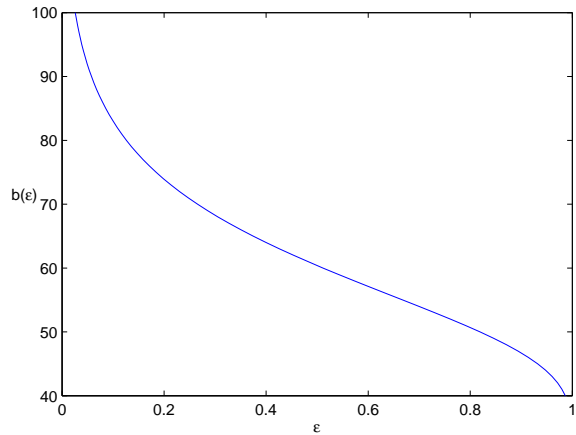}
\caption{ Dividend payout level $b$ as a function of $ \varepsilon$
(Parameters: $\mu=2$, $\sigma^2=50$, $ l=0.5$,$ a=0.1$,
$\delta=0.01$, $c=0.05$, $T=500$)}\label{b-epsilon}
\end{figure}
\begin{example}
Let $ b=100, \sigma^2=50, l=0.5, a=0.1, \delta=0.01, c=0.05, T=500 $
the figure \ref{difmu} below shows that the value function
$g(x,\mu)$ increases with $(x,\mu)$, so does the company's profit.
\end{example}
\begin{figure}[H]
\includegraphics[width=0.7\textwidth]{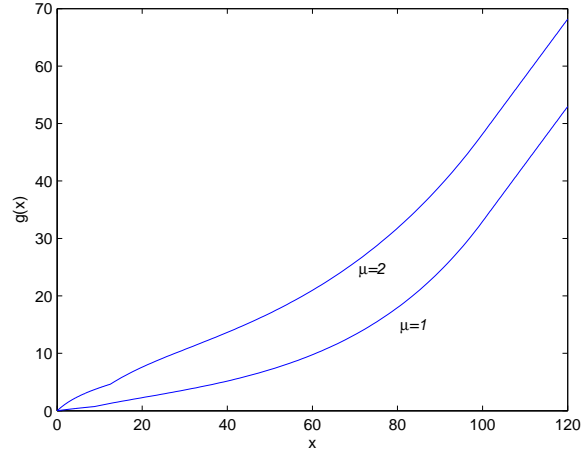}
\caption{ Value function $g(x,\mu)$ as a function of $(x,\mu)$
(Parameters: $ b=100, \sigma^2=50, l=0.5, a=0.1, \delta=0.01,
c=0.05, T=500 $ )}\label{difmu}
\end{figure}
\begin{example}
Let $\mu=2, b=100, \sigma^2=50, l=0.5, a=0.5,c=0.05, T=500 $. The
figure \ref{difp} below shows that the value function $g(x,p)$
increases with $(x,p)$, so does the company's profit.
\end{example}
\begin{figure}[H]
\includegraphics[width=0.7\textwidth]{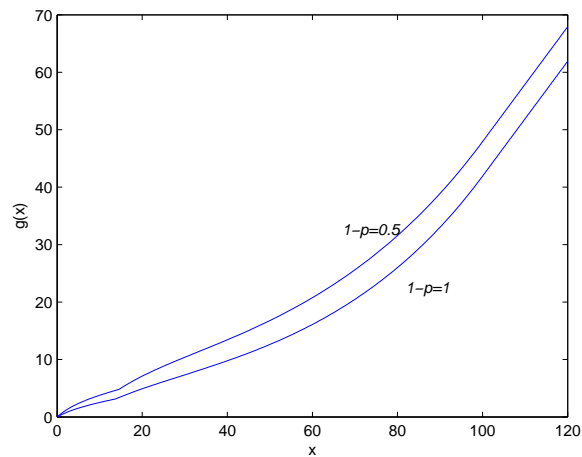}
\caption{ Value function $g(x,p)$ as a function of $(x,p)$
(Parameters: $\mu=2, b=100, \sigma^2=50, l=0.5, a=0.5, c=0.05, T=500
$ )}\label{difp}
\end{figure}
\begin{example}
Let $\mu=2, b=100, l=0.5, a=0.1, \delta=0.01, c=0.05, T=500$. The
figure \ref{difsigma} below shows that the value function
$g(x,\sigma^2)$ increases with $(x, \sigma^2)$, so does the
company's profit.
\end{example}
\begin{figure}[H]
\includegraphics[width=0.7\textwidth]{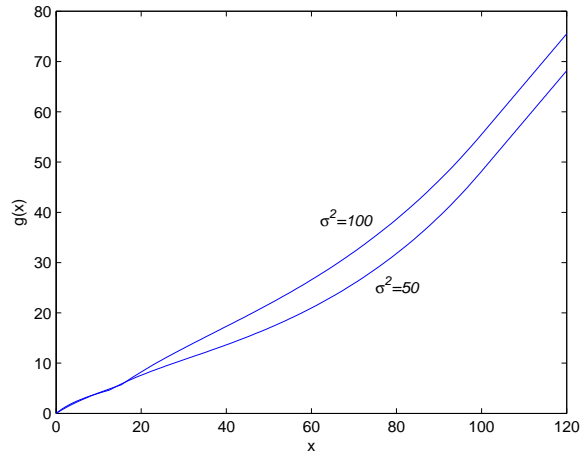}
\caption{ Value function $g(x,\sigma^2)$ as a function of
$(x,\sigma^2)$ (Parameters: $\mu=2, b=100, l=0.5, a=0.1,
\delta=0.01, c=0.05, T=500$)}\label{difsigma}
\end{figure}
\begin{example}
Let $\mu=2, \sigma^2=50, l=0.5, a=0.1, \delta=0.01, c=0.05, T=500 $.
The figure \ref{x-epsilon} below shows that the initial capital $x(
\epsilon )$ decreases with $ \epsilon$.
\end{example}
\begin{figure}[H]
\includegraphics[width=0.7\textwidth]{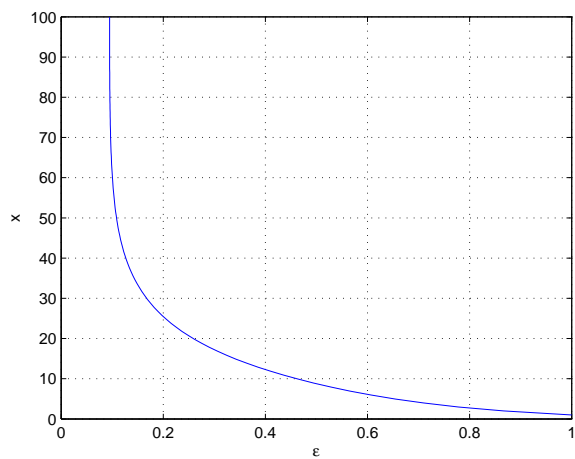}
\caption{ Initial capital $x( \epsilon )$ as a function of
$\epsilon$ (Parameters: $\mu=2, \sigma^2=50, l=0.5, a=0.1,
\delta=0.01, c=0.05, T=500 $)}\label{x-epsilon}
\end{figure}
\setcounter{equation}{0}
\section{\bf Properties on $V(x,b)$ and ruin  probability }
\vskip 10pt\noindent In this section, to prove Theorem 3.1, we list
some lemmas on properties of   $V(x,b)$ and ruin probability which
will be used late. The rigorous proofs of these lemmas will be given
in the appendix below. Throughout this paper we assume that $
\mu/2a>1$ and $ 0<l\leq U(t)\leq 1$.
\begin{lemma}\label{le01} There exists
$b_0>0 $ such that if  $f(x)\in C^{2}$ satisfies the following HJB
equations and the boundary conditions,
\begin{eqnarray}\label{eq16}
&&\max\limits_{U\in[l,1]}[\frac{1}{2}\sigma^{2}U^{2}f^{''}(x)+(\mu
U-aU^2-\delta)f^{'}(x)-cf(x)]=0,\\
&& \mbox{ for}\ \  0\leq x\leq
b_{0},\nonumber\\
&& f^{'}(x)=1, \ \mbox{for}\   x\geq b_{0}, \nonumber\\
&& f^{''}(x)=0,\ \mbox {for}  \ x\geq b_{0},\nonumber\\
&& f(0)=0,\nonumber
\end{eqnarray}
then we have the following,
\begin{eqnarray*}
&&\max\mathcal {L}f(x)\leq 0, f^{'}(x)\geq 1,\ \mbox{ for} \   x
\geq
0,\\
&& f(0)=0,\nonumber
\end{eqnarray*}
where
 $\mathcal {L}= \frac{1}{2}\sigma^{2}U^{2}\frac{d^{2}}{d
x^{2}}+(\mu U-aU^2-\delta)\frac{d}{d x}-c$.
\end{lemma}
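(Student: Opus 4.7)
The plan is to split the argument at $b_0$ and handle each regime separately. On $[0,b_{0}]$ the inequality $\max\mathcal{L}f\le 0$ is immediate: the HJB equation asserts that this maximum equals zero. So the only non-trivial statement on this interval is $f'(x)\ge 1$. I would derive it from the concavity of $f$. Writing out the pointwise first-order condition in the HJB and noting that on the relevant sub-interval the maximizer $U^{*}(x)$ is interior, one can solve for $f''$ in terms of $f'$ and verify $f''\le 0$; in the boundary sub-intervals where $U^{*}(x)\equiv l$ or $U^{*}(x)\equiv 1$ the HJB reduces to a linear second-order ODE whose only bounded solution compatible with $f(0)=0$ and the smooth-pasting at $b_{0}$ is concave. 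Combining concavity with the boundary datum $f'(b_{0})=1$ yields $f'(x)\ge f'(b_{0})=1$ on $[0,b_{0}]$.

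On $[b_{0},\infty)$ the condition $f'\equiv 1$ trivially gives $f'(x)\ge 1$, so only the operator bound remains. Since $f''\equiv 0$ there,
\begin{equation*}
\mathcal{L}f(x)=(\mu U-aU^{2}-\delta)-cf(x),
\end{equation*}
and the maximization over $U\in[l,1]$ is an elementary concave quadratic problem. Because $\mu/(2a)>1$ is assumed, the unconstrained maximizer $U=\mu/(2a)$ lies above $1$, so the maximum on $[l,1]$ is attained at $U=1$ with value $\mu-a$. Hence
\begin{equation*}
\max_{U\in[l,1]}\mathcal{L}f(x)=\mu-a-\delta-cf(x),\qquad x\ge b_{0}.
\end{equation*}

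To close the argument I use the smooth-pasting data at $b_{0}$ together with the HJB. Evaluating the HJB at $x=b_{0}^{-}$ with $f'(b_{0})=1$ and $f''(b_{0})=0$ gives $\max_{U}[\mu U-aU^{2}-\delta]-cf(b_{0})=0$, i.e.\ $cf(b_{0})=\mu-a-\delta$. For $x\ge b_{0}$, integrating $f'\equiv 1$ yields $f(x)=f(b_{0})+(x-b_{0})\ge f(b_{0})=(\mu-a-\delta)/c$, so $\max_{U}\mathcal{L}f(x)\le 0$ with equality precisely at $x=b_{0}$. Putting the two regimes together proves the conclusion; continuity of $f''$ at $b_{0}$ also confirms $f\in C^{2}(\mathbb{R}_{+})$.

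The main obstacle is the concavity claim on $[0,b_{0}]$, which is needed in order to deduce $f'\ge 1$ from the terminal datum $f'(b_{0})=1$. I expect to handle this by constructing $f$ explicitly through the candidate feedback $U^{*}(x)$: on the sub-interval where the interior first-order condition $\mu f'(x)-2aU\,f'(x)+\sigma^{2}U f''(x)=0$ gives a value in $[l,1]$, one can derive a closed-form ODE for $f'$ and check $f''\le 0$ directly; on the sub-intervals where $U^{*}\equiv l$ or $U^{*}\equiv 1$, the HJB becomes a constant-coefficient linear ODE whose relevant exponential solution branch is concave. The existence of $b_{0}>0$ for which all the smooth-pasting conditions can be simultaneously satisfied is the final technical point, verified by a monotonicity/continuity argument when matching the pieces of $f'$ across the sub-intervals.
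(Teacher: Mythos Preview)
Your proposal is correct and essentially matches the approach the paper has in mind. The paper itself omits the proof of this lemma entirely, referring the reader to Guo, Liu and Zhou (2004); your sketch---split at $b_{0}$, use the HJB identity on $[0,b_{0}]$, establish concavity via the explicit piecewise construction of $f$ with feedback $U^{*}$, and handle $x\ge b_{0}$ by the smooth-pasting identity $cf(b_{0})=\mu-a-\delta$---is precisely the argument of that reference and is also the template the present paper follows in its detailed proof of the closely related Lemma~\ref{le02}.
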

\begin{lemma}\label{le02}
Let $b>b_{0}$ be a predetermined variable and $g\in C^1(R_+)\cap
C^{2}(R_+\setminus \{b\})$ satisfy the following HJB equations and
the boundary conditions,
\begin{eqnarray}\label{eq17}
&&\max\limits_{U\in[l,1]}\big[\frac{1}{2}\sigma^{2}U^{2}g^{''}(x)+(\mu
U-aU^2-\delta)g^{'}(x)-cg(x)\big ]=0,\\
&& \mbox{ for}\ \  0\leq x\leq
b,\nonumber\\
&& g^{'}(x)=1, \ \mbox{for}\   x\geq b, \nonumber\\
&& g^{''}(x)=0,\ \mbox {for}  \ x> b,\nonumber\\
&& g(0)=0,\nonumber
\end{eqnarray}
then we have the following,
\begin{eqnarray}\label{eq18}
&&\max\mathcal {L}g(x)\leq 0, \ \mbox{ for} \   x \geq
0,\\
&&g^{'}(x)\geq 1, \ \mbox{ for} \   x \geq b,\nonumber \\
&& g(0)=0,\nonumber
\end{eqnarray}
where $g''(b):=g''(b-)$, $\mathcal {L}$ is defined as same as in
Lemma \ref{le01}. Indeed, the function $g(x)$ can be written as
follows,
\begin{eqnarray}\label{eq22}
g(x,b)=\left\{
\begin{array}{l l l}
A\big (e^{\alpha_1x}-e^{\beta_1x}\big ), &0\leq x\leq x_1,\\
\big (Be^{\alpha_2x_2}+Ce^{\beta_2x_2}\big
)\exp\big\{-\int_x^{x_2}\frac{c}{\frac{1}{2}
\mu\eta(y)-\delta}dy\big\},&x_1<x<x_2,\\
Be^{\alpha_2x}+Ce^{\beta_2x}, &x_2\leq x\leq b, \\
x-b+Be^{\alpha_2b}+Ce^{\beta_2b},&x>b,
\end{array}\right.
\end{eqnarray}
where
\begin{eqnarray}\label{eq23}
\alpha_1=\frac{-(\mu l-a l^2-\delta)+\sqrt{(\mu l-a
l^2-\delta)^2+2c\sigma^2 l^2}}{\sigma^2 l^2},
\end{eqnarray}
\begin{eqnarray}\label{eq24}
\beta_1=\frac{-(\mu l-al^2-\delta)-\sqrt{(\mu
l-al^2-\delta)^2+2c\sigma^2 l^2}}{\sigma^2l^2},
\end{eqnarray}
\begin{eqnarray}\label{eq25}
\alpha_2=\frac{-(\mu -a-\delta)+\sqrt{(\mu
-a-\delta)^2+2c\sigma^2}}{\sigma^2},
\end{eqnarray}
\begin{eqnarray}\label{eq26}
\beta_2=\frac{-(\mu -a-\delta)-\sqrt{(\mu
-a-\delta)^2+2c\sigma^2}}{\sigma^2}
\end{eqnarray}
\begin{eqnarray}\label{e260}
A=\frac{Be^{\alpha_2x_2}+Ce^{\beta_2x_2}}
{e^{\alpha_1x_1}-e^{\beta_1x_1}}\exp\big
\{-\int_{x_1}^{x_2}\frac{c}{\frac{1}{2}\mu\eta(y)-\delta}dy\big\},
\end{eqnarray}
\begin{eqnarray}\label{eq30}
B=\frac{[-c+(\frac{1}{2}\mu
1-\delta)\beta_2]e^{\beta_2x_2}}{[-c+(\frac{1}{2}\mu
1-\delta)\beta_2]e^{\beta_2x_2+\alpha_2b}\alpha_2-[-c+(\frac{1}{2}\mu
1-\delta)\alpha_2]e^{\alpha_2x_2+\beta_2b}\beta_2}, \nonumber\\
\end{eqnarray}
\begin{eqnarray}\label{eq31}
C=\frac{[-c+(\frac{1}{2}\mu
1-\delta)\alpha_2]e^{\alpha_2x_2}}{-[-c+(\frac{1}{2}\mu
1-\delta)\beta_2]e^{\beta_2x_2+\alpha_2b}\alpha_2+[-c+(\frac{1}{2}\mu
1-\delta)\alpha_2]e^{\alpha_2x_2+\beta_2b}\beta_2},
\end{eqnarray}
\begin{eqnarray}\label{e261}
x_1=\frac{1}{\alpha_1-\beta_1}\log\big
[\frac{c-\beta_1(\frac{1}{2}\mu l-\delta)}{c-\alpha_1(\frac{1}{2}\mu
l-\delta)}\big]>0,
\end{eqnarray}
\begin{eqnarray}\label{eqx2}
x_2=x_1+\frac{\sigma^2}{2a}\big[\frac{G}{G-H}
\log(\frac{G-l}{G-1})-\frac{H}{G-H}\log(\frac{l-H}{1-H})\big],
\end{eqnarray}
\begin{eqnarray}\label{e262}
G&=&\frac{2c\sigma^2+\mu^2+4a\delta+
\sqrt{(2c\sigma^2+\mu^2+4a\delta)^2-16a\mu^2\delta}}{4a\mu},\\
H&=&\frac{2c\sigma^2+\mu^2+4a\delta-
\sqrt{(2c\sigma^2+\mu^2+4a\delta)^2-16a\mu^2\delta}}{4a\mu},\\
K&=&(G-l)^{G/(G-H)}(l-H)^{-H/(G-H)},
\end{eqnarray}
and $\eta (x)$ is uniquely determined by
\begin{eqnarray}\label{eq28}
[G-\eta(x)]^{G/(G-H)}[\eta(x)-H]^{-H/(G-H)}=K\exp\big
[-\frac{2a}{\sigma^2}(x-x_1)\big ].
\end{eqnarray}
\end{lemma}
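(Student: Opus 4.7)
The plan is to construct $g$ explicitly by solving the HJB piecewise, exploiting that the HJB integrand is quadratic in $U$, and then to verify the two displayed inequalities. Pointwise in $x$, the stationary point in $U$ is $U^*(x)=\mu g'(x)/(2ag'(x)-\sigma^2 g''(x))$, and whether the admissible maximum on $[l,1]$ is $l$, interior, or $1$ should split $[0,b]$ into three consecutive intervals $[0,x_1]$, $[x_1,x_2]$, $[x_2,b]$ on which the optimal feedback equals $l$, $\eta(x):=U^*(x)\in(l,1)$, and $1$ respectively.

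On $[0,x_1]$ the HJB reduces to the linear constant-coefficient ODE $\frac{1}{2}\sigma^2 l^2 g''+(\mu l-al^2-\delta)g'-cg=0$ with characteristic roots $\alpha_1,\beta_1$ as in \eqref{eq23}--\eqref{eq24}; the boundary condition $g(0)=0$ selects $g(x)=A(e^{\alpha_1 x}-e^{\beta_1 x})$. Symmetrically, on $[x_2,b]$ the ODE $\frac{1}{2}\sigma^2 g''+(\mu-a-\delta)g'-cg=0$ has roots $\alpha_2,\beta_2$ in \eqref{eq25}--\eqref{eq26} and general solution $Be^{\alpha_2 x}+Ce^{\beta_2 x}$. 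On the interior region $[x_1,x_2]$, substituting the first-order condition $\sigma^2\eta g''+(\mu-2a\eta)g'=0$ back into the HJB eliminates $g''$ and leaves the separable first-order equation $g'/g=c/(\frac{1}{2}\mu\eta-\delta)$, which integrates to the exponential form in the middle line of \eqref{eq22}. Differentiating the first-order condition and combining with the reduced HJB yields a second separable equation $\eta'=-2a(\eta-G)(\eta-H)/(\sigma^2\eta)$, whose integration by partial fractions produces the implicit formula \eqref{eq28}. The $C^1$-pasting is then routine: $\eta(x_1)=l$ combined with the ODE on $[0,x_1]$ reduces to $g'(x_1)/g(x_1)=c/(\frac{1}{2}\mu l-\delta)$ and so to \eqref{e261}; $\eta(x_2)=1$ in \eqref{eq28} gives \eqref{eqx2}; and $g'(b)=1$ together with $C^1$-continuity at $x_2$ (and $x_1$) yields the linear system whose solution is \eqref{eq30}, \eqref{eq31}, \eqref{e260}.

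For the inequalities, on $[0,b]$ one has $\max\mathcal{L}g=0$ by construction. For $x\geq b$, $g(x)=g(b)+(x-b)$ is affine so $g''=0$; the $U$-quadratic is then maximized at $U=\mu/(2a)$, and since $\mu/(2a)>1$ by hypothesis this is clamped to $U=1$, giving $\max\mathcal{L}g(x)=(\mu-a-\delta)-cg(x)$. Because $g$ is increasing on $[b,\infty)$, it suffices to check $g(b)\geq(\mu-a-\delta)/c$; evaluating the HJB at $x=b-$ with $U^*=1$ and $g'(b)=1$ rewrites this as $g''(b-)\geq 0$. Here the hypothesis $b>b_0$ enters decisively: at $b=b_0$ the smooth-pasting characterization from Lemma \ref{le01} forces $g''(b_0-)=0$, and differentiating the explicit coefficients $B(b),C(b)$ in \eqref{eq30}--\eqref{eq31} with respect to the barrier shows that $b\mapsto g''(b-)$ is monotone on $[b_0,\infty)$ and hence nonnegative there. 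Finally $g'(x)\geq 1$ for $x\geq b$ is immediate from the boundary condition $g'(x)=1$. The main obstacle is this last monotonicity calculation, which requires unpacking the $b$-dependence of the linear system for $B,C$ and using the characterization of $b_0$ from Lemma \ref{le01} as the unique barrier satisfying the super-contact condition $g''(b_0-)=0$.
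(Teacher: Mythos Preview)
Your construction of $g$ and the reduction of $\max\mathcal{L}g(x)\le 0$ on $x\ge b$ to the single inequality $g''(b-)\ge 0$ match the paper's proof exactly. The only difference is in how $g''(b-)\ge 0$ is established: the paper writes $g''(b-)$ as the explicit quotient
\[
g''(b-)=\frac{\alpha_2^2 v(\beta_2)e^{\beta_2 x_2+\alpha_2 b}-\beta_2^2 v(\alpha_2)e^{\alpha_2 x_2+\beta_2 b}}{\alpha_2 v(\beta_2)e^{\beta_2 x_2+\alpha_2 b}-\beta_2 v(\alpha_2)e^{\alpha_2 x_2+\beta_2 b}},\qquad v(t)=-c+\big(\tfrac12\mu-\delta\big)t,
\]
and argues directly that the denominator is strictly negative (from $\alpha_2>0$, $\beta_2<0$, $v(\alpha_2),v(\beta_2)<0$) while the numerator, which vanishes at $b=b_0$ by the super-contact characterization $g''(b_0-)=0$, becomes strictly negative for $b>b_0$ because its $e^{\alpha_2 b}$-term has a negative coefficient and dominates. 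Your route---differentiating $B(b),C(b)$ to get monotonicity of $b\mapsto g''(b-)$---is equivalent but slightly more laborious; the paper's direct sign check of the numerator is the shorter path and uses exactly the same input, namely $g''(b_0-)=0$ from Lemma~\ref{le01}.
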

\begin{lemma}\label{le03} Let $g(b,x)$ be as the same as in lemma
\ref{le02}. Then $\frac{\partial}{\partial b}g(b,x)\leq 0$  holds
for $b\geq b_0$.
\end{lemma}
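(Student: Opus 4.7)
My plan is to establish the stronger pointwise monotonicity: for any $b_1>b_2\geq b_0$ one has $g(x,b_1)\leq g(x,b_2)$, from which $\partial g/\partial b\leq 0$ follows immediately. The mechanism is a verification-type argument. Because $g(\cdot,b_2)$ satisfies $\max_U\mathcal{L}g(\cdot,b_2)\leq 0$ on $[0,\infty)$ and the boundary condition $g'(x,b_2)=1$ for $x\geq b_2$ from Lemma \ref{le02}, it serves as an upper bound for the performance functional along any strategy whose dividend process only grows on the set $\{x\geq b_2\}$. The barrier strategy $\pi_{b_1}^{\ast}$ from (\ref{eq304}) (with $b^{\ast}$ replaced by $b_1$) is of precisely this kind, since it pays dividends only at the reflecting level $b_1>b_2$. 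On the other hand, running the same argument with $g(\cdot,b_1)$ in place of $g(\cdot,b_2)$ yields equality, so $g(x,b_1)=J(x,\pi_{b_1}^{\ast})$.

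To implement this, let $\tau=\tau_x^{\pi_{b_1}^{\ast}}$ and apply It\^o's formula to $e^{-c(t\wedge\tau)}g(R_{t\wedge\tau}^{\pi_{b_1}^{\ast}},b_2)$. The absolutely continuous drift is $e^{-cs}\mathcal{L}g(R_s^{\pi_{b_1}^{\ast}},b_2)$ evaluated at $U=U^{\ast}_{b_1}(R_s^{\pi_{b_1}^{\ast}})$, which is $\leq 0$ by the HJB inequality in Lemma \ref{le02}. The singular part is $-\int_0^{t\wedge\tau}e^{-cs}g'(R_s^{\pi_{b_1}^{\ast}},b_2)\,dL_s^{\pi_{b_1}^{\ast}}$; but $dL_s^{\pi_{b_1}^{\ast}}$ is supported on $\{R_s^{\pi_{b_1}^{\ast}}=b_1\}\subset\{x\geq b_2\}$, where $g'(\cdot,b_2)=1$, so this contribution reduces to $-\int_0^{t\wedge\tau}e^{-cs}\,dL_s^{\pi_{b_1}^{\ast}}$. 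The stochastic integral is a true martingale since $g'(\cdot,b_2)$, $U^{\ast}_{b_1}$ and $e^{-cs}$ are all bounded on the relevant set $[0,b_1]$. Taking expectations and letting $t\to\infty$, the residual boundary term $\mathbf{E}[e^{-c(t\wedge\tau)}g(R_{t\wedge\tau}^{\pi_{b_1}^{\ast}},b_2)]$ vanishes by bounded convergence, using $g(0,b_2)=0$ on $\{\tau<\infty\}$ and the exponential decay $e^{-ct}\to 0$ on $\{\tau=\infty\}$. The resulting inequality is
\begin{equation*}
g(x,b_2)\;\geq\;\mathbf{E}\Big[\int_0^{\tau}e^{-cs}\,dL_s^{\pi_{b_1}^{\ast}}\Big]\;=\;J(x,\pi_{b_1}^{\ast}).
\end{equation*}

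Repeating the same calculation with $g(\cdot,b_1)$ in place of $g(\cdot,b_2)$ produces equalities throughout: along the trajectory driven by $\pi_{b_1}^{\ast}$, the control $U^{\ast}_{b_1}(R_s)$ attains the supremum in the HJB equation for $g(\cdot,b_1)$, so $\mathcal{L}g(R_s^{\pi_{b_1}^{\ast}},b_1)=0$ identically; the dividend integrand $g'(b_1,b_1)$ equals $1$ by the boundary condition at $b=b_1$. This yields $g(x,b_1)=J(x,\pi_{b_1}^{\ast})$, and combined with the previous display we get $g(x,b_1)\leq g(x,b_2)$, completing the proof.

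The main technical obstacle is justifying It\^o's formula, since $g(\cdot,b_2)\in C^1(\mathbb{R}_+)\cap C^2(\mathbb{R}_+\setminus\{b_2\})$ has a jump in its second derivative at $x=b_2$. This is handled either by Meyer--It\^o for semimartingales with convex or $C^1$-with-absolutely-continuous-derivative integrands, or more elementarily by noting that the one-dimensional diffusion $R^{\pi_{b_1}^{\ast}}$ is nondegenerate (since $\sigma U^{\ast}_{b_1}\geq l\sigma>0$) and therefore spends zero Lebesgue time at the single point $b_2$, so the usual It\^o formula applies almost surely. A purely analytic alternative — differentiating the explicit representation (\ref{eq22})--(\ref{eq28}) in $b$ through the smooth-fit relations $g'(b,b)=1$ and $g''(b{-},b)=0$ — is available but substantially more tedious than the verification route above.
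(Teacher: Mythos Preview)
Your argument is correct, but it proceeds along a different line from the paper. The paper proves the lemma by direct differentiation of the closed form (\ref{eq22}): from $g'(b,b)=1$ one gets $\partial_b g(b,x)=B'(b)e^{\alpha_2 b}+C'(b)e^{\beta_2 b}$ for $x\geq b$, while the first three lines of (\ref{eq22}) depend on $b$ only through $B$ and $C$, so everything reduces to showing $B'e^{\alpha_2 x}+C'e^{\beta_2 x}\leq 0$ on $[x_2,b]$. This is then factored as $K_1(x)K_3(b)/K_2(x_2)^2$ and each factor's sign is read off from $\alpha_2>0>\beta_2$, $v(\beta_2)<v(\alpha_2)<0$, and $K_3(b_0)=0$. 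Your verification route is, in effect, the content of Theorem \ref{Th701}(ii) (whose proof uses only Lemmas \ref{le01}--\ref{le02}, so there is no circularity) combined with the inclusion $\Pi_{b_1}\subset\Pi_{b_2}$ for $b_1>b_2$: since $g(\cdot,b)=V(\cdot,b)=\sup_{\pi\in\Pi_b}J(\cdot,\pi)$, shrinking the admissible class can only decrease the value. Your approach is more conceptual, independent of the explicit formula, and transfers to other barrier problems; the paper's computation is more mechanical but yields an explicit expression for $\partial_b g$. One small inaccuracy in your closing remark: the ``smooth-fit relation $g''(b{-},b)=0$'' holds only at $b=b_0$; for $b>b_0$ the proof of Lemma \ref{le02} shows $g''(b{-})>0$. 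This does not affect the argument you actually carry out.
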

\begin{lemma}\label{le04} The ruin probability
 $\mathbf{P}[\tau_b^{b}\leq T]$  is strictly
 increasing w.r.t. $b$ on $ [x_2, b_K)$, where
 $b_K:=\inf\{b: \mathbf{P}[\tau_b^b\leq
T]=0\}$, and $x_2$  is defined by (\ref{eqx2}), $
\tau^b_b:=\tau_{b}^{\pi_b^*}$.
\end{lemma}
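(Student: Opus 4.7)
The plan is a coupling/comparison argument across three reflected diffusions driven by a common Brownian motion. A key preliminary observation, read off from the piecewise formulas (\ref{eq22})--(\ref{eq28}): the optimal feedback $U^*_b(x)$ is in fact a function of $x$ alone, given by $l$ on $[0,x_1]$, $\eta(x)$ on $[x_1,x_2]$, and $1$ on $[x_2,b]$. Consequently the drift $\mu U^*(x)-aU^*(x)^2-\delta$ and diffusion $\sigma U^*(x)$ appearing in (\ref{eq401-theorem42}) are independent of $b$; the only role of $b$ is through the initial value $R_0=b$ and the upper reflecting barrier at $b$.

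Fix $b_1<b_2$ in $[x_2,b_K)$. On a common $(\Omega,\mathcal{F},\{\mathcal{F}_t\},\mathbf{P})$ carrying a Brownian motion $W$, construct three Skorokhod-reflected solutions in the Lions--Sznitman sense: $R^{(1)}$ started at $b_1$ with upper reflection at $b_1$; an auxiliary $\tilde R$ started at $b_1$ with upper reflection at $b_2$; and $R^{(2)}$ started at $b_2$ with upper reflection at $b_2$. All three solve the SDE of (\ref{eq401-theorem42}) with the same state-dependent coefficients, differing only in initial value and barrier. Since $\sigma U^*(x)\geq \sigma l>0$ is uniformly elliptic on $[0,b_2]$ and the drift is bounded there, pathwise uniqueness holds and the coupling is well defined.

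I would then apply pathwise comparison twice. First, $\tilde R$ and $R^{(1)}$ share the same initial value and SDE coefficients, but $\tilde R$ reflects only at the higher barrier $b_2$; its Skorokhod local-time push is therefore dominated by that of $R^{(1)}$, which yields $\tilde R_t\geq R^{(1)}_t$ a.s.\ for all $t\geq 0$. Second, $R^{(2)}$ and $\tilde R$ share the barrier $b_2$ but $R^{(2)}_0=b_2>b_1=\tilde R_0$; the standard one-dimensional comparison theorem for SDEs reflected at the same upper barrier gives $R^{(2)}_t\geq \tilde R_t$ a.s. Chaining, $R^{(2)}_t\geq R^{(1)}_t$ for all $t$, so the ruin times satisfy $\tau^{(2)}\geq \tau^{(1)}$. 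This furnishes the pathwise monotonicity of $b\mapsto \mathbf{P}[\tau_b^b\leq T]$ on $[x_2,b_K)$ required for the application in Theorem \ref{theorem31}.

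The main obstacle is upgrading $\geq$ to strict monotonicity. The coupling alone only gives non-strict inequality because the two processes could in principle coincide pathwise. To get strictness, I would exhibit, via the support theorem together with the uniform ellipticity $\sigma U^*(x)\geq \sigma l$, an $\mathbf{P}$-positive event of Brownian paths on which $R^{(1)}$ is driven to $0$ within $[0,T]$ while $R^{(2)}$ remains strictly positive; the definition $b_K=\inf\{b:\mathbf{P}[\tau_b^b\leq T]=0\}$ ensures that for $b_2\in[x_2,b_K)$ the ruin probability is still positive and the comparison event is nondegenerate. An alternative, cleaner route is to apply the Hopf boundary lemma (strong maximum principle) to the backward parabolic Kolmogorov equation satisfied by the survival function $\phi^b(t,x)=\mathbf{P}_x[\tau^{\pi^*_b}>T-t]$, which carries a Neumann condition at $x=b$ and a Dirichlet condition at $x=0$. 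The technical subtlety is that $\phi^{b_1}$ and $\phi^{b_2}$ live on different spatial domains; I would handle this by transplanting through the auxiliary $\tilde R$ (whose PDE lives on $[0,b_2]$) and comparing $\phi^{b_2}$ with the corresponding PDE for $\tilde R$ on the same domain, so that the strong maximum principle applies directly and yields the strict inequality.
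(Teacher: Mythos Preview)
Your coupling for the non-strict monotonicity is essentially the paper's argument: both exploit that the feedback $U^*_b(x)$ in (\ref{eq21}) does not depend on $b$, so only the starting point and the reflecting level vary, and pathwise comparison gives $\tau^{(2)}\ge\tau^{(1)}$. Your auxiliary process $\tilde R$ (start $b_1$, barrier $b_2$) is exactly the paper's $R^{[1]}$.

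Where the approaches diverge is the upgrade to strict inequality. You sketch two possibilities---the support theorem, or a Hopf/strong-maximum-principle argument for the survival PDE---but execute neither; the support theorem would need care in the reflected setting, and the PDE route has the domain-mismatch issue you yourself flag. The paper instead gives a short, concrete strong-Markov factorization: with $A=\{R^{[2]}\text{ reaches }b_1\text{ by time }T\}$ one has $\{\tau_{b_2}^{b_2}\le T\}\subset A$ and, by the strong Markov property, $\mathbf P[\tau_{b_2}^{b_2}\le T]\le \mathbf P(A)\,\mathbf P[\tau_{b_1}^{b_2}\le T]$, so strictness reduces to $\mathbf P(A^c)>0$. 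This is then checked by the elementary observation that on $[x_2,b_2]$ one has $U^*\equiv1$, hence the unreflected dynamics there are $(\mu-a-\delta)\,dt+\sigma\,dW_t$; the probability that this Brownian motion with drift, started at $(b_1+b_2)/2$, stays inside $(b_1,b_2)$ on $[0,T]$ is explicitly positive. That computation is precisely where the hypothesis $b_1\ge x_2$ enters, and it makes the strictness step entirely elementary. Your abstract routes are plausible, but the paper's argument is shorter and uses only the special structure already at hand.
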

\begin{lemma}\label{le05}
Let $\phi^{b}(t,y)\in C^1(0,\infty)\cap C^2(0,b)$ and satisfy the
following partial differential equation and the boundary conditions,
\begin{eqnarray}\label{eq36}
\left\{
\begin{array}{l l l}
\phi_{t}^{b}(t,y)=\frac{1}{2}[U_b^{*}(y)]^2\sigma^2\phi_{yy}^{b}(
t,y)+(\mu U_b^{*}(y)-a[U_b^{*}(y)]^{2}-\delta)\phi_{y}^{b}(t,y),\\
\phi^{b}(0,y)=1,\  \mbox{for}\ \  0<y\leq b, \\
\phi^{b}(t,0)=0,\phi_{y}^{b}(t,y)=0,\ \mbox{for} \ t>0.
\end{array}\right.
\end{eqnarray}
Then $\phi^{b}(T,y)=1-\psi^{b}(T,y)$,
  where  $\psi^{b}(T,y):=
{\bf P}\big \{ \tau_{y}^{b}< T \big\} $, and  $ U^{*}(x)$ is defined
by
\begin{eqnarray}\label{eq21}
U^*_b(x)=\left\{
\begin{array}{l l l} l,&0\leq x\leq x_1,\\
\eta(x)&x_1<x<x_2,\\
1, &x_2\leq x.
\end{array}\right.
\end{eqnarray}
\end{lemma}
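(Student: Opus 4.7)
My plan is to identify $\phi^b(T,y)$ with the survival probability $\mathbf{P}\{\tau_y^b\geq T\} = 1-\psi^b(T,y)$ via a Dynkin/Feynman--Kac verification applied to the reflected diffusion $R^{\pi_b^*}_t$ started at $R_0^{\pi_b^*}=y\in(0,b]$, killed on hitting $0$ and reflected at $b$ according to the defining SDE of $\pi_b^*$ (in which $\int_0^\infty I_{\{R_s^{\pi_b^*}<b\}}\,dL_s^{\pi_b^*}=0$).

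First, fix $T>0$ and define $\Phi(t,x):=\phi^b(T-t,x)$ on $[0,T]\times[0,b]$, which by the PDE satisfies $\Phi_t+\mathcal{L}_0\Phi=0$ on $(0,T)\times(0,b)$, where the undiscounted generator is
$$\mathcal{L}_0 := \tfrac{1}{2}[U_b^*(x)]^2\sigma^2\partial_{xx} + (\mu U_b^*(x)-a[U_b^*(x)]^2-\delta)\partial_x.$$
Applying It\^o's formula to $\Phi(t,R_t^{\pi_b^*})$ up to the stopping time $\tau_T:=T\wedge\tau_y^b$ gives
$$\Phi(\tau_T,R_{\tau_T}^{\pi_b^*}) = \Phi(0,y) + \int_0^{\tau_T}\!(\Phi_t+\mathcal{L}_0\Phi)(s,R_s^{\pi_b^*})\,ds - \int_0^{\tau_T}\!\Phi_x(s,R_s^{\pi_b^*})\,dL_s^{\pi_b^*} + M_{\tau_T},$$
where $M_t:=\int_0^t \Phi_x(s,R_s^{\pi_b^*})\sigma U_b^*(R_s^{\pi_b^*})\,dW_s$ is a local martingale.

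Next, by the PDE the $ds$-integral vanishes identically, while the Neumann boundary condition $\phi_y^b(t,b)=0$ combined with the support property $\int_0^\infty I_{\{R_s^{\pi_b^*}<b\}}\,dL_s^{\pi_b^*}=0$ forces the $dL^{\pi_b^*}$-integral to vanish as well (the local time only accumulates on $\{R_s^{\pi_b^*}=b\}$, where $\Phi_x=0$). Boundedness of $\Phi,\Phi_x$ on the compact set $[0,T]\times[0,b]$ together with $U_b^*\leq 1$ makes $M$ a true martingale on $[0,\tau_T]$, so taking expectations yields $\phi^b(T,y)=\Phi(0,y)=\mathbf{E}[\Phi(\tau_T,R_{\tau_T}^{\pi_b^*})]$. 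I then split on the events $\{\tau_y^b<T\}$ and $\{\tau_y^b\geq T\}$: on the former, $R_{\tau_y^b}^{\pi_b^*}=0$ and the Dirichlet condition gives $\phi^b(T-\tau_y^b,0)=0$; on the latter, $\tau_T=T$ and the initial condition gives $\phi^b(0,R_T^{\pi_b^*})=1$. Summing the two contributions yields exactly $\mathbf{P}\{\tau_y^b\geq T\}=1-\psi^b(T,y)$.

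The main obstacle is the regularity of $\phi^b$ across the switching points $x_1,x_2$ of the feedback $U_b^*$ and at the reflecting boundary $b$: since $U_b^*$ is only continuous there, $\phi^b$ may be merely $C^1$ across these points rather than $C^2$, so the na\"ive It\^o formula is not directly applicable. Handling this requires either invoking the It\^o--Tanaka--Meyer formula (using that the second-derivative defect is concentrated on a Lebesgue-null set that $R^{\pi_b^*}$ spends zero occupation time on, by the non-degeneracy $U_b^*\geq l>0$), or approximating $\phi^b$ by a sequence of $C^2$ mollifications $\phi^{b,\varepsilon}$, applying the argument above to each, and passing to the limit via dominated convergence using uniform bounds on $\phi^{b,\varepsilon}$ and $\phi^{b,\varepsilon}_y$ away from the finite set of non-smooth points. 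The remaining steps are routine once this regularity issue is resolved.
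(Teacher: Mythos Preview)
Your proposal is correct and follows essentially the same route as the paper: apply It\^o's formula to $\phi^b(T-t,R_t^{\pi_b^*})$ on $[0,T\wedge\tau_y^b]$, use the PDE to kill the drift, the Neumann condition at $b$ to kill the $dL^{\pi_b^*}$-integral, take expectations, and then read off the survival probability from the initial and Dirichlet boundary data. The paper's proof is terser and simply invokes the ``generalized It\^o formula'' without discussing the martingale property or the regularity at $x_1,x_2,b$; your added care on these points is warranted but does not constitute a different argument.
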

\begin{lemma}\label{le06}
Let the function $\phi^{b}(t,x)$ solve  the equation(\ref{eq36}) and
$u(b)\equiv\phi^{b}(T,b)$. Then $u(b)$ is a  continuous  function of
$b$ on $ [b_0, +\infty )$.
\end{lemma}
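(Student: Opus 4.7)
My plan is to prove continuity by passing to the probabilistic representation $u(b)=1-\mathbf{P}[\tau^{b}_{b}\leq T]$ guaranteed by Lemma \ref{le05}, and then establishing a stability estimate for the reflected SDE together with a no-atom property for the running minimum.

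The first useful observation is that the feedback $U^{*}_{b}(\cdot)$ from (\ref{eq21}) does \emph{not} depend on $b$ on its effective domain $[0,b]$: since $b\geq b_{0}\geq x_{2}$, the thresholds $x_{1},x_{2}$ and the function $\eta$ given by (\ref{eq28}) are fixed parameters, so the drift $\mu U^{*}(x)-a[U^{*}(x)]^{2}-\delta$ and diffusion $\sigma U^{*}(x)$ appearing in (\ref{eq304}) are the \emph{same} globally Lipschitz functions for every admissible $b$. (Lipschitz continuity is clear on the two constant pieces and follows on $(x_{1},x_{2})$ from the implicit function theorem applied to (\ref{eq28}) together with $\eta\in[l,1]$.) The only $b$-dependence is through the initial datum and the upper reflection barrier.

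Fix $b\in[b_{0},\infty)$ and a sequence $b_{n}\to b$. I would construct $R^{\pi^{*}_{b_{n}}}$ and $R^{\pi^{*}_{b}}$ on one probability space with a common Brownian motion by solving the Skorohod problem with barriers $b_{n}$ and $b$ respectively. Applying It\^{o}'s formula to $(R^{\pi^{*}_{b_{n}}}_{t}-R^{\pi^{*}_{b}}_{t})^{2}$, the Lipschitz property of the coefficients controls the drift and diffusion contributions by a Gronwall term, while the cross-terms involving $dL^{\pi^{*}_{b_{n}}}$ and $dL^{\pi^{*}_{b}}$ are controlled using the boundary support properties $\int_{0}^{\infty}\mathbf{1}_{\{R^{\pi^{*}_{b_{n}}}_{t}<b_{n}\}}dL^{\pi^{*}_{b_{n}}}_{t}=0$ (and analogously for $b$), yielding a bound of size $O(|b_{n}-b|)$ on the $dL$-terms. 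Gronwall then gives
\begin{equation*}
\mathbf{E}\Bigl[\sup_{0\leq t\leq T}\bigl|R^{\pi^{*}_{b_{n}}}_{t}-R^{\pi^{*}_{b}}_{t}\bigr|^{2}\Bigr]\leq C_{T}|b_{n}-b|^{2}\longrightarrow 0,
\end{equation*}
so in particular the running infima $\underline{R}^{b_{n}}_{T}:=\inf_{t\leq T}R^{\pi^{*}_{b_{n}}}_{t}$ converge in probability to $\underline{R}^{b}_{T}$.

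To finish, I would use a standard portmanteau argument: the convergence in probability of $\underline{R}^{b_{n}}_{T}$ to $\underline{R}^{b}_{T}$ gives
\begin{equation*}
\mathbf{P}\bigl[\underline{R}^{b}_{T}<0\bigr]\leq\liminf_{n}\mathbf{P}\bigl[\underline{R}^{b_{n}}_{T}\leq 0\bigr]\leq\limsup_{n}\mathbf{P}\bigl[\underline{R}^{b_{n}}_{T}\leq 0\bigr]\leq\mathbf{P}\bigl[\underline{R}^{b}_{T}\leq 0\bigr],
\end{equation*}
so it suffices to show $\mathbf{P}[\underline{R}^{b}_{T}=0]=0$. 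This is the no-atom property for the infimum, which follows from non-degeneracy $\sigma U^{*}(x)\geq\sigma l>0$ and the strong Markov property at $\tau^{b}_{b}$: on $\{\tau^{b}_{b}\leq T\}$, the law of the iterated logarithm for the non-degenerate diffusion forces the process to become strictly negative immediately after reaching $0$, so $\underline{R}^{b}_{T}<0$ almost surely on that event. Combining the two limit bounds gives $u(b_{n})\to u(b)$, proving continuity on $[b_{0},\infty)$.

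The main obstacle is the stability estimate with moving reflection barrier: one has to show that the Skorohod reflection is continuous in the barrier parameter $b$, and in particular control the mixed local-time integrals $\int_{0}^{t}(R^{\pi^{*}_{b_{n}}}_{s}-R^{\pi^{*}_{b}}_{s})\,d(L^{\pi^{*}_{b_{n}}}_{s}-L^{\pi^{*}_{b}}_{s})$ by $C|b_{n}-b|$ times bounded local-time increments. Everything else (Lipschitz coefficients, non-degeneracy, portmanteau) is routine once this estimate is in place.
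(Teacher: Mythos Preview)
Your approach is sound and genuinely different from the paper's. The paper proceeds purely analytically: it rescales $y=bz$ so that all $\phi^{b}$ live on the fixed domain $[0,1]\times(0,\infty)$, writes the difference $w=\theta^{b_2}-\theta^{b_1}$, multiplies the resulting parabolic equation by $w$, integrates by parts, and closes an $L^{2}$ energy estimate via Young's inequality and Gronwall. Your route instead exploits the probabilistic representation from Lemma~\ref{le05}, reducing everything to stability of the reflected SDE in the barrier parameter plus an atom-free property for the hitting time. Your observation that $U^{*}_{b}$ is actually $b$-independent on $[0,b]$ once $b\geq b_{0}\geq x_{2}$ is the key simplification that makes this work cleanly, and it is something the paper's PDE calculation does not exploit (there the rescaled coefficients $a(bz)/b^{2}$, $\mu(bz)/b$ genuinely depend on $b$). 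What your approach buys is a pointwise conclusion for $u(b)$ directly; the paper's energy estimate literally yields only $L^{2}((0,t)\times(0,1))$ convergence and would need an extra parabolic regularity step to reach the pointwise statement. Conversely, the paper's method does not require any moment bounds on local times or the no-atom argument.

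Two minor corrections. First, the local-time cross term gives a forcing of order $|b_{n}-b|\cdot\mathbf{E}[L^{b_{n}}_{T}+L^{b}_{T}]$, so Gronwall yields $\mathbf{E}\sup_{t\leq T}|R^{\pi^{*}_{b_{n}}}_{t}-R^{\pi^{*}_{b}}_{t}|^{2}\leq C_{T}|b_{n}-b|$, not $|b_{n}-b|^{2}$; this is harmless for your conclusion. Second, your LIL argument for $\mathbf{P}[\underline{R}^{b}_{T}=0]=0$ tacitly requires the reflected process to be extended below $0$ (e.g.\ by setting $U^{*}(x)=l$ for $x<0$); you should say this explicitly, or replace it by the equivalent statement that the hitting-time law of $0$ for a uniformly non-degenerate one-dimensional diffusion has no atoms.
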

\vskip 10pt\noindent
 \setcounter{equation}{0}
\section{\bf Proof of Main Result}
\vskip 10pt\noindent In this section we will give the proof of
Theorem \ref{theorem31}. Before this proof we first prove the
following.
\begin{theorem}\label{Th701}
Let $f(x)$, $g(x,b)$ and $U^*_b(x)$ be as the same as in lemma
\ref{le01}, lemma \ref{le02} and lemma
\ref{le05},respectively. \hskip 8pt Then\\
(i) If $b\leq b_0$ we have  $V(x,b)=V(x,b_0)=V(x)=f(x)$, the optimal
strategy associated with $V(x)$ is  $\pi_{b_o}^\ast=\{
 U^*_{b_0}(R^{\pi_{b_o}^\ast}_\cdot), L^{\pi_{b_o}^\ast}_\cdot\}$,
 where the process $\{R^{\pi_{b_o}^\ast}_t, L^{\pi_{b_o}^\ast}_t \}$
 is uniquely determined by the following SDE,
\begin{eqnarray}\label{eq701}
\left\{
\begin{array}{l l l}
dR_t^{\pi_{b_o}^\ast}=(\mu
U^*_{b_0}(R^{\pi_{b_o}^\ast}_t)-a{U^*_{b_0}}^2(R^{\pi_{b_o}^\ast}_t)-\delta
)dt+\sigma U^*_{b_0}(R^{\pi_{b_o}^\ast}_t)d
{W}_{t}-dL_t^{\pi_{b_o}^\ast},\\
R_0^{\pi_{b_o}^\ast}=x,\\
0\leq R^{\pi_{b_o}^\ast}_t\leq  b_0,\\
\int^{\infty}_0 I_{\{t: R^{\pi_{b_o}^\ast}_t
<b_0\}}(t)dL_t^{\pi_{b_o}^\ast}=0.
\end{array}\right.
\end{eqnarray}
(ii) If $b> b_0$ we have  $V(x,b)=g(x)$ and the optimal strategy
$\pi_{b^*}^\ast$ is $
 \{U^\ast_{b^*}(R^{\pi_{b^*}^\ast}_t),L^{\pi_{b^*}^\ast}_t\}$, where
$\{R^{\pi_{b^*}^\ast}_t, L^{\pi_{b^*}^\ast}_t  \} $ is uniquely
determined  by the following SDE
\begin{eqnarray}\label{eq702}
\left\{
\begin{array}{l l l}
dR_t^{\pi_{b^*}^\ast}=(\mu
U^*_{b^*}(R^{\pi_{b^*}^\ast}_t)-a{U^*_{b^*}}^2(R^{\pi_{b^*}^\ast}_t)-\delta
)dt+\sigma U^*_{b^*}(R^{\pi_{b^*}^\ast}_t)d
{W}_{t}-dL_t^{\pi_{b^*}^\ast},\\
R_0^{\pi_{b^*}^\ast}=x,\\
0\leq R^{\pi_{b^*}^\ast}_t\leq  b^*,\\
\int^{\infty}_0 I_{\{t: R^{\pi_{b^*}^\ast}_t
<b^*\}}(t)dL_t^{\pi_{b^*}^\ast}=0.
\end{array}\right.
\end{eqnarray}
\end{theorem}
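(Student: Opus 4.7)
The plan is to prove each part by a standard verification argument: first show the candidate function dominates $J(x,\pi)$ for every admissible $\pi$, then exhibit a specific strategy that achieves equality. Existence and pathwise uniqueness of the reflected SDEs (\ref{eq701}) and (\ref{eq702}) are taken from Lions--Sznitman \cite{Lions}; the hard work is the Itô-based verification.

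For part (i), fix $b\leq b_0$ and an arbitrary $\pi=(U_\pi,L^\pi)\in\Pi_b$. I would apply Itô's formula to the semimartingale $e^{-c(t\wedge\tau^\pi_x)}f(R^\pi_{t\wedge\tau^\pi_x})$, decomposing the finite-variation process $L^\pi$ into its continuous part $L^{\pi,c}$ and its jumps:
\begin{align*}
e^{-c(t\wedge\tau)}f(R^\pi_{t\wedge\tau}) &= f(x)+\int_0^{t\wedge\tau}e^{-cs}\bigl(\mathcal{L}^{U_\pi(s)}f\bigr)(R^\pi_s)\,ds\\
&\quad +\int_0^{t\wedge\tau}e^{-cs}\sigma U_\pi(s)f'(R^\pi_s)\,dW_s-\int_0^{t\wedge\tau}e^{-cs}f'(R^\pi_{s-})\,dL^{\pi,c}_s\\
&\quad +\sum_{0<s\leq t\wedge\tau}e^{-cs}\bigl[f(R^\pi_s)-f(R^\pi_{s-})\bigr].
\end{align*}
By Lemma \ref{le01}, $\mathcal{L}^{U_\pi(s)}f\leq 0$ pointwise and $f'\geq 1$ everywhere. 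Since the jumps of $R^\pi$ equal $-\Delta L^\pi$, the mean-value theorem together with $f'\geq 1$ gives $f(R^\pi_s)-f(R^\pi_{s-})\leq-\Delta L^\pi_s$. Localizing the stochastic integral, taking expectations, using $f(0)=0$, and letting $t\to\infty$ (dominated convergence is justified by the linear growth of $f$ for $x\geq b_0$ and standard integrability of $e^{-cs}dL^\pi_s$ for strategies with finite value) yields $f(x)\geq J(x,\pi)$, hence $V(x,b)\leq f(x)$.

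For the matching lower bound I would check that under $\pi^*_{b_0}$ defined by (\ref{eq701}) all inequalities above become equalities: $R^{\pi^*_{b_0}}_t$ remains in $[0,b_0]$ so $\mathcal{L}^{U^*_{b_0}}f=0$ on its range (by the pointwise maximizing choice $U^*_{b_0}$ in the HJB), $L^{\pi^*_{b_0}}$ is continuous and grows only on $\{R_s=b_0\}$ where $f'(b_0)=1$, and there are no jumps. Thus $f(x)=J(x,\pi^*_{b_0})\leq V(x,b)$, which gives $V(x,b)=f(x)=V(x,b_0)$ for every $b\leq b_0$.

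Part (ii) is structurally identical with $g(\cdot,b^*)$ in place of $f$ and Lemma \ref{le02} in place of Lemma \ref{le01}. The only real obstacle is that $g(\cdot,b^*)$ is merely $C^1$ at $b^*$, so the ordinary Itô formula does not apply directly; however, by the smooth-fit condition $g'(b^*-)=g'(b^*+)=1$ the signed measure $g''(dx)$ carries no atom at $b^*$, so I would either (a) invoke the Meyer--Itô/Peskir change-of-variables formula (which reduces to the classical Itô formula precisely because there is no Dirac part), or (b) mollify $g$ in a neighbourhood of $b^*$ and pass to the limit, using that $\int_0^t\mathbf 1_{\{R_s=b^*\}}ds=0$ a.s. for the reflected diffusion to control the approximation. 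With this technicality handled, the verification argument carries over verbatim, giving $V(x,b)=g(x,b^*)$ and identifying $\pi^*_{b^*}$ as optimal.
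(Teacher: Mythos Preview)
Your approach is the same verification-by-It\^{o} argument that the paper uses, and part (i) is essentially correct. But part (ii) does \emph{not} carry over verbatim from part (i), and you have missed the one place where the constraint $\pi\in\Pi_b$ is actually used.

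In part (i) the estimate on the dividend terms relies on $f'\geq 1$ on \emph{all} of $[0,\infty)$ (Lemma~\ref{le01}). Lemma~\ref{le02} only gives $g'(x)\geq 1$ for $x\geq b$; indeed, since $g''(b-)>0$ for $b>b_0$ and $g'(b)=1$, one has $g'(x)<1$ in a left neighbourhood of $b$. Hence for a generic admissible $\pi$ the inequality
\[
-\int_0^{t\wedge\tau}e^{-cs}g'(R^\pi_{s})\,d\tilde L^{\pi}_s
+\sum_{s\leq t\wedge\tau} e^{-cs}\bigl[g(R^\pi_s)-g(R^\pi_{s-})\bigr]
\;\leq\; -\int_0^{t\wedge\tau}e^{-cs}\,dL^\pi_s
\]
is false. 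The paper closes this gap via its display~(\ref{eq708}): for $\pi\in\Pi_b$ the process $L^\pi$ increases only on $\{R^\pi\geq b\}$, and a jump of $R^\pi$ cannot cross the level $b$ (either $R^\pi_{s-}\geq R^\pi_s\geq b$ or $b\geq R^\pi_{s-}\geq R^\pi_s$). On $[b,\infty)$ one has $g'\equiv 1$, so the continuous dividend integral equals $\int e^{-cs}\,d\tilde L^\pi_s$ exactly, and each jump term equals $-(L^\pi_s-L^\pi_{s-})$ exactly. You need to insert this step; without it the upper bound $V(x,b)\leq g(x,b)$ does not follow.

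Two minor points. First, throughout part (ii) you write $b^\ast$, but the statement concerns a generic $b>b_0$; keep the dependence on $b$ explicit. Second, your handling of the failure of $C^2$ at $b$ (Meyer--It\^{o} or mollification plus zero occupation time at $b$) is actually more careful than the paper's own proof, which simply adopts the convention $g''(b):=g''(b-)$ and applies It\^{o}'s formula without further comment.
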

\begin{proof}
(i) If $b\leq b_0$ then since $\pi^*_{b_0}\in \Pi_{b_0}\subset\Pi_b
$ we have $V(x, b_0)\leq V(x,b)\leq V(x)$. It suffices to show
$V(x)\leq f(x)=V(x, b_0) $. For a admissible strategy $\pi=\{a_\pi,
L^\pi\}$ we assume that $( R^\pi_t, L^\pi_t)$ is  the process
defined by (\ref{eq204}).  Set $\Lambda=\{s:L_{s-}^{\pi}\neq
L_{s}^{\pi}\}$ and let
 $\hat{L}=\sum_{s\in\Lambda, s\leq t}(L_{s}^{\pi}-L_{s-}^{\pi})$
and $\tilde{L}_{t}^{\pi}=L_{t}^{\pi}-\hat{L}_{t}^{\pi}$ denote the
discontinuous part and continuous part of
$L_{s}^{\pi}$,respectively.  Let $\tau^\varepsilon =\inf\{t\geq 0:
R^{\pi}_{t}\leq \varepsilon\}$. Applying It\^{o} formula to
stochastic process $ R^\pi_t $ and $f(x)$, we have
\begin{eqnarray}\label{eq703} e^{-c(t\wedge
\tau^{\varepsilon})}f(R_{t\wedge \tau^{\varepsilon}}^{\pi})&=
&f(x)+\int_{0}^{t\wedge\tau^{\varepsilon}}e^{-cs}\mathcal
{L} f(R_{s}^{\pi})ds\nonumber\\
&+&\int_{0}^{t\wedge\tau^{\varepsilon}}a_\pi\sigma
e^{-cs}f^{'}(R_{s}^{\pi})d\mathcal {W}_{s}\mathcal
-\int_{0}^{t\wedge\tau^{\varepsilon}}e^{-cs}f^{'}(R_{s}^{\pi})dL_{s}^{\pi}\nonumber\\
&+&\sum\limits_{s\in\Lambda ,s\leq t\wedge
\tau^{\varepsilon}}e^{-cs}[f(R_{s}^{\pi})-f(R_{s-}^{\pi})\nonumber\\
&-&f^{'}(R_{s-}^{\pi})(R_{s}^{\pi}-R_{s-}^{\pi})]\nonumber \\
&= &f(x)+\int_{0}^{t\wedge\tau^{\varepsilon}}e^{-cs}\mathcal
{L}  f(R_{s}^{\pi})ds\nonumber\\
&+&\int_{0}^{t\wedge\tau^{\varepsilon}}a_\pi \sigma
e^{-cs}f^{'}(R_{s}^{\pi})d\mathcal {W}_{s}\mathcal
-\int_{0}^{t\wedge\tau^{\varepsilon}}e^{-cs}f^{'}(R_{s}^{\pi})d\tilde{L}_{s}^{\pi}\nonumber\\
&+&\sum\limits_{s\in\Lambda ,s\leq t\wedge
\tau^{\varepsilon}}e^{-cs}[f(R_{s}^{\pi})-f(R_{s-}^{\pi}))],
\end{eqnarray}
where
\begin{eqnarray*}
\mathcal {L}=\frac{1}{2}U^{2}\sigma^{2}\frac{d^{2}}{dx^{2}}+(\mu
U-aU^2-\delta)\frac{d}{dx}-c.
\end{eqnarray*}
By lemma \ref{le01} the second term in the right-hand side of last
equation is nonpositive. Since  $f^{'}(R_{s\wedge
\tau^\varepsilon}^{\pi})\leq f^{'}(\varepsilon)$, the  third term is
a square integrable martingale. Taking expectations on both sides of
Eq.(\ref{eq703}) and then letting $\varepsilon\rightarrow 0$ one has
\begin{eqnarray}\label{eq704}
{\bf E}\big \{e^{-c(t\wedge \tau ^\pi _x)}f(R_{t\wedge \tau ^\pi
_x}^{\pi})\big \}&\leq& f(x)-{\bf E}\big \{\int_{0}^{t\wedge \tau
^\pi _x}e^{-cs}
f^{'}(R_{s}^{\pi})d\tilde{L}_{s}^{\pi}\big \}\nonumber\\
&+&{\bf E}\big\{\sum\limits_{s\in\Lambda ,s\leq t\wedge
\tau ^\pi _x}e^{-cs}[f(R_{s}^{\pi})-f(R_{s-}^{\pi})]\big \}.\nonumber\\
\end{eqnarray}
Since $f^{'}(x)\geq 1$ for $x\geq 0$,
\begin{eqnarray}\label{eq705}
f(R_{s}^{\pi})-f(R_{s-}^{\pi})\leq-(L_{s}^{\pi}-L_{s-}^{\pi}).
\end{eqnarray}
So the inequalities (\ref{eq704}) and (\ref{eq705}) yield
\begin{eqnarray}\label{eq706} {\bf E}\big\{e^{-c(t\wedge \tau ^\pi
_x)}f(R_{t\wedge \tau ^\pi _x}^{\pi})\big\}&+&{\bf
E}\big\{\int_{0}^{ t\wedge \tau ^\pi _x}e^{-cs}dL_{s}^{\pi}\big
\}\leq f(x).
\end{eqnarray}
By the definition of $\tau ^\pi _x $ ,$f(0)=0$ and $f'(x)\geq 1$, it
easily follows  that
\begin{eqnarray}\label{eq707}
\liminf\limits_{t\rightarrow\infty}e^{-c(t\wedge \tau ^\pi
_x)}f(R_{t\wedge
\tau ^\pi _x}^{\pi})&=&e^{-c\tau}f(0)I_{\{\tau ^\pi _x<\infty\}}\nonumber\\
&+& \liminf\limits_{t\rightarrow\infty}e^{-ct}f(R_{t})I_{\{\tau ^\pi
_x =\infty\}}\geq 0.
\end{eqnarray}
So we deduce from the inequalities ( \ref{eq706} ) and (\ref{eq707}
) that
\begin{eqnarray*}
J(x,\pi)={\bf E}[\big\{\int_{0}^{\tau ^\pi _x} e^{-cs}dL_{s}^{\pi}
\}]\leq f(x).
\end{eqnarray*}
Therefore
\begin{eqnarray*}
V(x)\leq f(x).
\end{eqnarray*}
If we choose the control strategy $\pi_{b_o}^\ast=\{
 U^*_{b_0}(R^{\pi_{b_o}^\ast}_\cdot),L^{\pi_{b_o}^\ast}_\cdot\}$ and
 stochastic process $(R^{\pi_{b_o}^\ast}_t, L^{\pi_{b_o}^\ast}_t)$
   as in SDE (\ref{eq701}, the inequalities above
become equalities, so
\begin{eqnarray*}
V(x)\leq f(x)=V(x,b_0).
\end{eqnarray*}
(i) thus follows. \vskip 10pt \noindent (ii) Assume $b\geq b_0$. Let
$( R^\pi_t, L^\pi_t)$ be the process as in (\ref{eq204}) for $\pi\in
\Pi_b $. Then
\begin{eqnarray}\label{eq708}
\left\{
\begin{array}{l l l}
\mathbf{P}\{ R_{s-}^\pi\geq R_{s}^\pi\geq b\}+\mathbf{P}\{b\geq
R_{s-}^\pi\geq R_{s}^\pi\} =1,
\ \forall s\geq 0 ,\\
\mathcal {L}g(R_s^\pi)\leq 0 \  \mbox{ $ s\leq \tau ^\pi _x
=\inf\{t\geq 0: R^\pi_s\leq 0\} $},\\
g'(x)=1 \ \mbox{ $x\geq b$}.
\end{array}\right.
\end{eqnarray}
Replacing $f$  in proof of (i) above  with $g$, then using
(\ref{eq708}) and the same argument as in (i) we can get
$$ V(x,b)\leq g(x).$$
Similarly, letting $\pi_{b}^\ast=\{
 U^*_b(R^{\pi_{b}^\ast}_\cdot),L^{\pi_{b}^\ast}_\cdot\}$
 we derive $V(x,b)=g(x)$. Therefore (ii) follows.
\end{proof}
\vskip 10pt\noindent Now we turn to proof of Theorem
\ref{theorem31}.
 \vskip 10pt\noindent {\bf Proof of Theorem
\ref{theorem31}}.
If $\mathbf{P}[\tau_{b_0}^{\pi^*_{b_0}}\leq
T]\leq\varepsilon$, then the conclusion is obvious because  the
constraints does not work and the proof reduces to  the usual
optimal control problem. \vskip 10pt\noindent
 If $
\mathbf{P}[\tau_{b_0}^{\pi^*_{b_0}}\leq T]>\varepsilon $, then by
lemmas \ref{le04}-\ref{le06} there exists a unique $b^\ast$ solving
equation $\mathbf{P}\{\tau_{b}^{\pi^*_{b}}\leq T\}=\varepsilon $
 and $x_2<b^\ast=\inf\{ b:  b\in \mathfrak{B} \}
> b_0$. By theorem \ref{Th701} we know that  $b^\ast$ meets (\ref{eq302})
 and (\ref{eq303}) because $V(x, b)=g(x,b)$  is a decreasing function of
   $ b(\geq b_0)\in [ x_2, b_K)$ due to
lemma \ref{le03}. So the optimal strategy associated with  the
optimal return function $ V(x,b^*)=g(x, b^*)$ is $
 \{U^\ast_{b^*}(R^{\pi_{b^*}^\ast}_t),L^{\pi_{b^*}^\ast}_t\}$ and
$\{R^{\pi_{b^*}^\ast}_t, L^{\pi_{b^*}^\ast}_t  \} $ is uniquely
determined  by  SDE (\ref{eq304}). Thus we complete the proof. \ \
$\Box$
 \setcounter{equation}{0}
\section{\bf Appendix}
\vskip 10pt\noindent In this section we will give the proofs of
lemmas we concerned with throughout this paper. \vskip 10pt\noindent
{\bf Proof of lemma \ref{le01}}. \ Since  the proof is complete
similar to that of Guo Xin, Liu Jun and Zhou Xunyu\cite{s1}(2004),
we omit it here.
 \vskip 10pt\noindent
{\bf Proof of lemma \ref{le02}}. \ Since the proof is  somewhat
similar to that of He, Hou and Liang \cite{ime02}(2008) and Guo, Liu
and Zhou \cite{s1}(2004), we only give the sketch of the proof as
follows.\\
If the $ \max$ in (\ref{eq17}) is attained in the interior of the
control region, then, by differentiating w.r.t. $U$, we can find the
maximizing function $U^*_b(x)$ can be defined by (\ref{eq21}) above,
and
\begin{eqnarray}\label{eq411}
\eta(x)=\frac{\mu g'(x)}{2ag'(x)-\sigma^2g''(x)} =\frac{2cg(x)}{\mu
g'(x)}+\frac{2\delta}{\mu},0\leq x\leq b.
\end{eqnarray}
Letting $x\rightarrow 0+$ we have  $\eta(x)\rightarrow
2\delta/\mu<l$, whereas taking $x\rightarrow b_0$ and noticing that
$g''(b_0)=0$,  we also have  $\eta(x)\rightarrow \mu/2a>1$. So by
(\ref{eq21}) we find $0\leq x_1<x_2\leq b_0<b$ such that
\begin{eqnarray}\label{e221}
U^*(x)=\left\{
\begin{array}{l l l} l,&0\leq x\leq x_1,\\
\eta(x)&x_1<x<x_2,\\
1, &x_2\leq x\leq b.
\end{array}\right.
\end{eqnarray}
Putting this expression into (\ref{eq17}) we have (\ref{eq22}). Then
by smooth fit principle we can determine parameters $\alpha_1$,
$\beta_1$,$\alpha_2$, $\beta_2$, $x_1$, $x_2$, $A$, $B$, $C$,
$G$,$H$ and $K$ by (\ref{eq23})-(\ref{e262}).
 Now it remains to prove the solution $g$ defined by (\ref{eq22})
  satisfies (\ref{eq18}).
We only need to prove
\begin{eqnarray*}
\max\limits_{U\in[l,1]}[\frac{1}{2}\sigma^{2}U^{2}g^{''}(x)+(\mu
U-aU^2-\delta)g^{'}(x)-cg(x)]=0,\ \mbox{ for}\  x\geq b.
\end{eqnarray*}
For $x\geq b$, we first  prove $g''(b-)\geq 0$. Noticing that
\begin{eqnarray}\label{eq32}
g''(b-)=\frac{\alpha_2^2v(\beta_2)e^{\beta_2x_2+\alpha_2b}-\beta_2^2
v(\alpha_2)e^{\alpha_2x_2+\beta_2b}}
{v(\beta_2)e^{\beta_2x_2+\alpha_2b}\alpha_2-v(\alpha_2)e^{\alpha_2x_2+\beta_2b}\beta_2},\nonumber\\
\end{eqnarray}
where $v(t)\equiv-c+(\frac{1}{2}\mu -\delta)t$. Since
$|\alpha_2|\leq |\beta_2|$,   $v(\beta_2)<v(\alpha_2)<0$, $\alpha_2
>0, \beta_2 <0 $ and $g''(b_0)=0$, the numerator and
denominator of (\ref{eq32}) are strictly negative, so  $g''(b-)\geq
0$ . Then  by (\ref{eq17})
\begin{eqnarray*}\label{eq33}
\max \mathcal {L} \{g(x)\}&=&(\mu -a-\delta)-c(x-b+g(b))\nonumber\\
&\leq & \mu -a-\delta-cg(b)\nonumber\\
&\leq & \frac{1}{2}\sigma^2a^2g''(b-)+\mu -a-\delta-cg(b)\leq 0.
\end{eqnarray*}
Thus we complete the proof.

\vskip 10pt\noindent {\bf Proof of lemma \ref{le03}}.\ If $x\geq b$,
then using (\ref{eq22}) and $g'(x)=1$ one has
\begin{eqnarray}
\frac{\partial}{\partial b}g(b,x)
&=&-1+\alpha_2Be^{\alpha_2b}+
\beta_2Ce^{\beta_2b}+B'e^{\alpha_2b}+C'e^{\beta_2b}\nonumber\\
&=&B'e^{\alpha_2b}+C'e^{\beta_2b}
\end{eqnarray}
where $B'$ and $C'$ denote derivatives w.r.t $b$ of $B$ and
$C$,respectively. By the first three expressions in (\ref{eq22}) the
proof reduces to showing that for $x_2\leq x\leq b$
\begin{eqnarray}\label{e429}
B'e^{\alpha_2x}+C'e^{\beta_2x}\leq 0.
\end{eqnarray}
By (\ref{e262})-(\ref{eq28}) one has
\begin{eqnarray*}\label{eq430}
 B'e^{\alpha_2x}+C'e^{\beta_2x}
&=&\big\{\frac{
[v(\alpha_2)e^{\alpha_2x_2+\beta_2x}-v(\beta_2)e^{\beta_2x_2+\alpha_2x}
]}{
[\alpha_2v(\beta_2)e^{\beta_2x_2+\alpha_2b}-
\beta_2v(\alpha_2)e^{\alpha_2x_2+\beta_2b}]^2}\big\}\\
&&\times\big [
\alpha_2^2v(\beta_2)e^{\beta_2x_2+\alpha_2b}-
\beta_2^2v(\alpha_2)e^{\alpha_2x_2+\beta_2b}
\big ]
\\&=&\frac{K_1(x)}{K_2^2(x_2)}\times K_3(b).\\
\end{eqnarray*}
where
\begin{eqnarray*}
K_1(x)&=&v(\alpha_2)e^{\alpha_2x_2+\beta_2x}-v(\beta_2)e^{\beta_2x_2+\alpha_2x},\\
K_2(x)&=&\alpha_2v(\beta_2)e^{\beta_2x+\alpha_2b}-\beta_2v(\alpha_2)
e^{\alpha_2x+\beta_2b},\\
K_3(x)&=&\alpha_2^2v(\beta_2)e^{\beta_2x_2+
\alpha_2x}-\beta_2^2v(\alpha_2)e^{\alpha_2x_2+\beta_2x}.
\end{eqnarray*}
Since $K_1(x)$ is an increasing function of $x$ and $K_1(x_2)>0$,
$K_1(x)>0$ for $x\in [ x_2, b]$. Noting that $K_3(b)\leq 0$, we know
that (\ref{e429}) is true. \hskip 10pt $\Box$

 \vskip 10pt\noindent
{\bf Proof of lemma \ref{le04}}. \ Because the proof of decreasing
property is complete similar to that of theorem 3.1 in
\cite{ime02}(2008),  we only need to  prove that the probability of
bankruptcy is strictly decreasing on $[x_2 ,b_K]$, that is,
\begin{eqnarray*}
\mathbf{P}[\tau_{b_1}^{b_1}\leq T]-\mathbf{P}[\tau_{b_2}^{b_2}\leq
T]>0
\end{eqnarray*}
 for any $b_2>b_1\geq x_2 $.
By comparison theorem,
$$ \mathbf{P}[\tau_{b_1}^{b_1}\leq T]-\mathbf{P}[\tau_{b_2}^{b_2}\leq T]\geq
\mathbf{P}[\tau_{b_1}^{b_2}\leq T]-\mathbf{P}[\tau_{b_2}^{b_2}\leq
T].
$$
The proof can be reduced to proving that
\begin{eqnarray}\label{59}
\mathbf{P}[\tau_{b_1}^{b_2}\leq T]-\mathbf{P}[\tau_{b_2}^{b_2}\leq
T]>0.
\end{eqnarray}
\vskip 10pt \noindent To prove the inequality (\ref{59}) we define
stochastic processes $R_t^{[1]}$ and $R_t^{[2]}$ by the following
SDEs:
\begin{eqnarray*}
dR_t^{[1]}=[\mu
U_{b_2}^*(R_t^{[1]})-a{U_{b_2}^*}^2(R_t^{[1]})-\delta]dt+U_{b_2}^*(R_t^{[1]})\sigma
d\mathcal {W}_t-dL_t^{b_2},R_0^{[1]}=b_1,
\end{eqnarray*}
\begin{eqnarray*}
dR_t^{[2]}=[\mu
U_{b_2}^*(R_t^{[2]})-a{U_{b_2}^*}^2(R_t^{[2]})-\delta]dt+U_{b_2}^*(R_t^{[2]})\sigma
d\mathcal{W}_t-dL_t^{b_2},R_0^{[2]}=b_2,
\end{eqnarray*}
respectively,  where $ U_b^{*}(\cdot)$ is  as in (\ref{eq21}).
\vskip 10pt \noindent Let $\tau^{b_1}=\inf\limits_{t\geq
0}\{t:R_t^{[2]}=b_1\}$, $A= \{ \tau^{b_1}\leq T\}$ and $B=\big\{\sl
R_{t}^{[2]}$ will go to bankruptcy in a time interval
$[\tau^{b_1},\tau^{b_1}+T]$ and $\tau^{b_1}\leq T \big \}$. Then
$\{\tau_{b_2}^{b_2}\leq T\}\subset B \subset A$. Moreover, by using
strong Markov property of $R_t^{[2]}$, we have
\begin{eqnarray*}
\mathbf{P}[\tau_{b_1}^{b_2}\leq T]=\mathbf{P}[B|A].
\end{eqnarray*}
So
\begin{eqnarray*}
 \mathbf{P}[\tau_{b_1}^{b_2}\leq T]-\mathbf{P}[\tau_{b_2}^{b_2}\leq T]
 &\geq &
\mathbf{P}[\tau_{b_1}^{b_2}\leq T]-\mathbf{P}(B)\nonumber\\
&=& \mathbf{P}[\tau_{b_1}^{b_2}\leq T]-\mathbf{P}(A)\mathbf{P}(B|A)\nonumber\\
&=&\mathbf{P}[\tau_{b_1}^{b_2}\leq T](1-\mathbf{P}(A))\nonumber\\
&=&\mathbf{P}[\tau_{b_1}^{b_2}\leq T]\mathbf{P}(A^c).
\end{eqnarray*}
By theorem \ref{theorem41}, $\mathbf{P}[\tau_{b_1}^{b_2}\leq T]\geq
\mathbf{P}[\tau_{b_1}^{b_1}\leq T]>0$. So we only need to prove
$\mathbf{P}(A^c)>0$.  For doing this we define stochastic processes
$R_t^{[3]}$ and $R_t^{[4]}$ by the following SDEs
\begin{eqnarray*}
\left\{
\begin{array}{l l l}
dR_t^{[3]}=[\mu
U_{b_2}^*(R_t^{[3]})-a{U_{b_2}^*}^2(R_t^{[3]})-\delta]dt+U_{b_2}^*(R_t^{[3]})\sigma
d\mathcal{W}_t-dL_t^{b_2},\\
R_0^{[3]}=\frac{b_1+b_2}{2}
 \end{array}\right.
\end{eqnarray*}
and
\begin{eqnarray*}
\left\{
\begin{array}{l l l}
dR_t^{[4]}=[\mu
U_{b_2}^*(R_t^{[4]})-a{U_{b_2}^*}^2(R_t^{[4]})-\delta]dt+U_{b_2}^*(R_t^{[4]})\sigma
d\mathcal{W}_t,\\
R_0^{[4]}=\frac{b_1+b_2}{2}.
 \end{array}\right.
\end{eqnarray*}
 Setting $D=\{\inf\limits_{0\leq t\leq T}R_t^{[3]}>b_1\}$ and
    $E=\{\inf\limits_{0\leq t\leq T}R_t^{[4]}>b_1,\sup\limits_{0\leq
t\leq T}R_t^{[4]} <b_2\}$, by comparison theorem on SDE, we have
$\mathbf{P}(A^c)\geq \mathbf{P}(D)\geq \mathbf{P}(E)$. Since
$U_{b_2}^*(x)=1$ we have
\begin{eqnarray}\label{510}
R_t^{[4]}= \frac{b_1+b_2}{2}+ [\mu-a-\delta]t+\sigma\mathcal{W}_t \
\mbox{ on $ E$}.
\end{eqnarray}
We deduce from (\ref{510}) and properties of Brownian motion with
drift  (cf. Borodin and Salminen \cite{s110} (2002)) that
\begin{eqnarray*}
\mathbf{P}(E)=\frac{e^{-\mu'^2T/2}}{\sqrt{2\pi
T}}\sum_{k=-\infty}^{\infty}\int_{b_1/\sigma}^{b_2/\sigma}e^{\mu'(z-x)}
[(e^{-(z-x+\frac{2k(b_2-b_1)}{\sigma})^2/2T})\\
-(e^{-(z+x-\frac{2b_1-2k(b_2-b_1)}{\sigma})^2/2T})]dz>0,\nonumber
\end{eqnarray*}
where $\mu'=(\mu-a-\delta)/\sigma$ and $x=\frac{b_1+b_2}{2\sigma}$.
 Thus the
proof follows. $\Box$
 \vskip 10pt \noindent
{\bf Proof of lemma \ref{le05}}.
 Let $\phi(t,y)\equiv \phi^{b}(t,y)$. Since the
stochastic process $(R^{\pi_{b}^\ast,y}_{t\wedge\tau_{y}^{b}},
L_{t\wedge\tau_{y}^{b}}^{\pi_{b}^\ast} )$   is continuous, by
applying the generalized It\^{o} formula to
$(R^{\pi_{b}^\ast,y}_{t\wedge\tau_{y}^{b}},
L_{t\wedge\tau_{y}^{b}}^{\pi_{b}^\ast} )$ and $\phi(t,y)$, we have
for $0<y\leq b$
\begin{eqnarray}\label{eq52}
\phi(T-(t\wedge\tau_{y}^{b}),Y_{t\wedge\tau_{y}^{b}}^{b})
&=&\phi(T,y)\nonumber\\
&+&\int_{0}^{t\wedge\tau_{y}^{b}}(\frac{1}{2}U^{*2}(Y_{s}^{b})
\sigma^{2}\phi_{yy}(T-s,Y_{s}^{b})\nonumber\\
&+&(\mu U^{*}(Y_{s}^{b})-a[U^{*}((Y_{s}^{b}))]^{2}-\delta)
\phi_{y}(T-s,Y_{s}^{b})\nonumber\\
&-&\phi_{t}(T-s, Y_{s}^{b}))ds-\int_{0}^{t\wedge\tau_{y}^{b}}
\phi_{y}(T-s,Y_{s}^{b})dL_{s}^{b}\nonumber\\
&+&\int_{0}^{t\wedge\tau_{y}^{b}}
a(Y_{s}^{b})\sigma\phi_{y}(T-s,Y_{s}^{b})dW_{s}^{b}.
\end{eqnarray}
where $ \tau_{y}^{b}\equiv\tau^{\pi_{b}^\ast}_y=
\inf\{t: R^{\pi_{b}^\ast,y}_t =0\}    $.\\
 Letting $t=T$ and taking mathematical expectation at both
sides of (\ref{eq52}) yield that
\begin{eqnarray*}
\phi(T,y)&=&\mathbf{E}[\phi(T-(T\wedge\tau_{y}^{b}),
R^{\pi_{b}^\ast,y}_{T\wedge\tau_{y}^{b}}
)]\nonumber\\
&=&\mathbf{E}[\phi(0, R^{\pi_{b}^\ast,y}_{T})1_{T<\tau_{y}^{b}}]+
\mathbf{E}[\phi(T-\tau_{y}^{b},0)1_{T\geq\tau_{y}^{b}})]\nonumber\\
&=&\mathbf{E}[1_{T<\tau_{y}^{b}}]=1-\psi(T,y).
\end{eqnarray*}\hskip 10pt
$\Box$
 \vskip 10pt \noindent
{\bf Proof of lemma \ref{le06}}.\ Let
$a(y):=\frac{1}{2}[U_b^{*}(y)]^2\sigma^2$£¬$\mu(y):=\mu
U_b^{*}(y)-a[U_b^{*}(y)]^{2}-\delta$. Then the equation (\ref{eq36})
becomes
\begin{eqnarray}\label{eq53}
\phi_{t}^{b}(t,y)=a(y)\phi_{yy}^{b}( t,y)+\mu(y)\phi_{y}^{b}(t,y).
\end{eqnarray}
By the properties of $U_b^{*}(y)$, we can easily show that $a(y)$
and $\mu(y)$ are continuous in $[0,b]$. So there exists a unique
solution in $C^1(0,\infty)\cap C^2(0,b)$ for (\ref{eq36}). Moreover,
$a^{'}(y)$, $\mu^{'}(y)$ and $a^{''}(y)$ are bounded in
$(0,x_1)$,$(x_1, x_2)$ and $(x_2,b)$. So we only need to prove that
$\phi^{b}(t,x)$ is continuous in $ b $. Let  $y=bz$ and
$\theta^{b}(t,z)=\phi^{b}(t,by)$, the equation (\ref{eq36}) becomes
\begin{eqnarray}\label{eq54}
\left\{
\begin{array}{l l l}
\theta_t^{b}(t,z)=[a(bz)/b^2]\theta_{zz}^{b}(t,z)+[\mu(bz)/b]\theta_z^{b}(t,z),\\
\theta^{b}(0,z)=1, \ \mbox{for}\  0<z\leq 1, \\
\theta^{b}(t,0)=0,\theta_{z}^{b}(t,1)=0,\ \mbox{for} \  t>0.
\end{array}\right.
\end{eqnarray}
So the proof of Lemma \ref{le06} reduces to proving
$\lim\limits_{b_2\rightarrow
b_1}\theta^{b_2}(t,z)=\theta^{b_1}(t,z)$ for fixed $b_1>b_0$.
Setting $w(t,z)=\theta^{b_2}(t,z)-\theta^{b_1}(t,z)$, we  have
\begin{eqnarray}\label{424}
\left\{
\begin{array}{l l l}
w_t(t,z)&=&[a(b_2z)/b_2^2]w_{zz}(t,z)+[\mu(b_2z)/b_2]w_{z}(t,z)\\
&+& \{a(b_2z)/b_2^2-a(b_1z)/b_1^2\}\theta_{zz}^{b_1}(t,z)\\
&+&\{a(b_2z)/b_2^2-a(b_1z)/b_1^2\}\theta_z^{b_1}(t,z),\\
w(0,z)&=&0,\ \mbox{for}\ 0<z\leq 1,\\
w(t,0)&=&0,\ w_x(t,1)=0,\ \mbox{for}\ t>0.
\end{array}\right.
\end{eqnarray}
By multiplying the first equation in (\ref{424}) by $w(t,z)$ and
then integrating  on $ [0,t]\times [0,1]$,
\begin{eqnarray}\label{425}
&&\int_0^{t}\int_0^1 w(s,x)w_t(s,x)dxds
\nonumber\\
&=&\int_0^{t}\int_0^1 \big\{[a(b_2x)/b_2^2]
w(s,x)w_{xx}(s,x)\nonumber\\
&+&[\mu(b_2x)/b_2]w(s,x)w_x(s,x)\nonumber\\
 &+&[a(b_2x)/b_2^2-a(b_1x)/b_1^2]w(s,x)\theta_{xx}^{b_1}(t,x)\nonumber\\
  &+&w(s,x)[\mu(b_2x)/b_2-\mu(b_1x)/b_1]w(s,x)\theta_x^{b_1}(t,x)\big \}dxds\nonumber\\
&\equiv & E_1 +E_2+E_3+E_4.
\end{eqnarray}
We now estimate terms $E_i $, $i=1,\cdots, 4$, as follows. \vskip
5pt \noindent By  definitions of $a(x)$ and $\mu (x) $, there exit
positive constants $D_1 $, $D_2 $ and $ D_3 $ such that
$[\mu(b_2z)/b_2]^2\leq D_1$, $[a(bx)/b^2]'\geq 0$,
$[a(b_2x)/b_2^2]\geq D_2$ and $[a(b_2x)/b_2^2]'\leq D_3$, so by
Young's inequality, we have for any $\lambda_1>0$ and $\lambda_2>0$
\begin{eqnarray}\label{427}
E_1&=&\int_0^{t}\int_0^1[a(b_2x)/b_2^2]w(s,x)w_{xx}(s,x)dxds\nonumber\\
&=&-\int_0^{t}\int_0^1 [a(b_2x)/b_2^2]w_x^2(s,x)dxds\nonumber\\
&&-
\int_0^{t}\int_{0}^{m/b_2}[a(b_2x)/b_2^2]^{'}w_x(s,x)w(s,x)dxds\nonumber\\
&\leq & -D_2\int_0^{t}\int_0^1
w_x^2(s,x)dxds\nonumber\\
&& +D_3\int_0^{t}\int_0^1 [\lambda_1 w_x^2(s,x)+\frac{1}{4\lambda_1}
w^2(s,x)]dxds
\end{eqnarray}
and
\begin{eqnarray}\label{428}
E_2&=&\int_0^{t}\int_0^1[\mu(b_2x)/b_2]w(s,x)w_x(s,x)dxds\nonumber\\
&\leq
&\lambda_2\int_0^{t}\int_0^1 w_x^2(s,x)dxds\nonumber\\
&&+\frac{D_1}{4\lambda_2}\int_0^{t}\int_0^1 w^2(s,x)dxds.
\end{eqnarray}
We decompose $E_3$ as follows.
\begin{eqnarray}\label{eq510}
E_3&=&\int_0^t\int_0^1 \{a(b_2x)/b_2^2-a(b_1x)/b_1^2\}
w(s,x)\theta_{xx}^{b_1}(s,x)dxds\nonumber\\
&=&-\int_0^t\int_0^1 \{a(b_2x)/b_2^2-a(b_1x)/b_1^2\}
w_x(s,x)\theta_{x}^{b_1}(s,x)dxds\nonumber\\
&&-\int_0^t\{\int_0^{x_1/b_2}\{a(b_2x)/b_2^2-a(b_1x)/b_1^2\}'
w(s,x)\theta_{x}^{b_1}(s,x)dx\nonumber\\
&&-\int_{x_1/b_2}^{x_1/b_1}\{a(b_2x)/b_2^2-a(b_1x)/b_1^2\}'
w(s,x)\theta_{x}^{b_1}(s,x)dx\nonumber\\
&&-\int_{x_1/b_1}^{x_2/b_2}\{a(b_2x)/b_2^2-a(b_1x)/b_1^2\}'
w(s,x)\theta_{x}^{b_1}(s,x)dx\nonumber\\
&&-\int_{x_2/b_2}^{x_2/b_1}\{a(b_2x)/b_2^2-a(b_1x)/b_1^2\}'
w(s,x)\theta_{x}^{b_1}(s,x)dx\nonumber\\
&&-\int_{x_2/b_1}^{1}\{a(b_2x)/b_2^2-a(b_1x)/b_1^2\}
w(s,x)'\theta_{x}^{b_1}(s,x)dx\}ds\nonumber\\
&=&-\int_0^t\int_0^1 \{a(b_2x)/b_2^2-a(b_1x)/b_1^2\}
w_x(s,x)\theta_{x}^{b_1}(s,x)dxds\nonumber\\
&&-\int_0^t\{\int_{x_1/b_2}^{x_1/b_1}
\{a(b_2x)/b_2^2-a(b_1x)/b_1^2\}'
w(s,x)\theta_{x}^{b_1}(s,x)dx\nonumber\\
&&-\int_{x_1/b_1}^{x_2/b_2}\{a(b_2x)/b_2^2
-a(b_1x)/b_1^2\}'w(s,x)\theta_{x}^{b_1}(s,x)dx\nonumber\\
&&-\int_{x_2/b_2}^{x_2/b_1}\{a(b_2x)/b_2^2
-a(b_1x)/b_1^2\}'w(s,x)\theta_{x}^{b_1}(s,x)dx\}ds\nonumber\\
&:=& E_{30}+E_{31}+E_{32}+E_{33}.
\end{eqnarray}
It easily follows that
\begin{eqnarray*}
\lim_{b_2\rightarrow b_1}\{|E_{31}| + |E_{33}|\}=0.
\end{eqnarray*}
Since there exists an $L>0$ such that for or all
$x\in(x_1/b_2,x_2/b_1)$
\begin{eqnarray}\label{518}
\left\{
\begin{array}{l l l}
|[a(b_2x)/b_2^2]-[a(b_1x)/b_1^2]|\leq L|b_2-b_1|,\\
|[a(b_2x)/b_2^2]'-[a(b_1x)/b_1^2]'|\leq L|b_2-b_1|,\\
|[\mu(b_2x)/b_2]-[\mu(b_1x)/b_1]|\leq L|b_2-b_1|,
\end{array}\right.
\end{eqnarray}
we have for any $\lambda_3>0$
\begin{eqnarray*}
E_{30}+E_{32}
&=&-\int_0^t\int_0^1\{a(b_2x)/b_2^2-a(b_1x)/b_1^2\}w_x(s,x)
\theta_{x}^{b_1}(s,x)dxds\nonumber\\
&-&\int_0^t\int_{x_1/b_1}^{x_2/b_2}
\{a(b_2x)/b_2^2-a(b_1x)/b_1^2\}'w(s,x)\theta_{x}^{b_1}(s,x)dxds\\
&\leq &\frac{L^2(b_2-b_1)^2}{4\lambda_3}\int_0^t\int_0^1
[\theta_{x}^{b_1}(s,x)]^2dxds\\
&& +\lambda_3\int_0^t\int_0^1 w_x^2(s,x)+w^2(s,x)dxds.
\end{eqnarray*}
By the boundary conditions, we estimate $\int_0^t\int_0^1
[\theta_{x}^{b}(s,x)]^2dxds$  for $b\in [b_1,b_2]$ as follows:
\begin{eqnarray*}
0&=&\int_0^t\int_0^1 \theta_{t}^{b}(s,x)\theta^{b}(s,x)\\
&& -[a(bx)/b^2]\theta_{xx}^{b}(s,x)\theta^{b}(s,x)
-[\mu(bx)/b]\theta_{x}^{b}(s,x)\theta^{b}(s,x)dxds\\
&=&\frac{1}{2}\int_0^1 [\theta^{b}(s,x)]^2dx
+\int_0^t\int_0^1 [a(bx)/b^2] [\theta_{x}^{b}(s,x)]^2dxds\\
&&+\int_0^t\int_0^1
[a(bx)/b^2]'[\theta_{x}^{b}(s,x)][\theta^{b}(s,x)]dxds\\
&&-\int_0^t\int_0^1 [\mu(bx)/b][\theta_{x}^{b}(s,x)][\theta^{b}(s,x)]dxds\\
&\geq & \lambda \int_0^t\int_0^1 [\theta_{x}^{b}(s,x)]^2dxds
-\frac{\lambda}{2}\int_0^t\int_0^1 [\theta_{x}^{b}(s,x)]^2dxds\\
&&-\frac{D_4}{2\lambda}\int_0^t\int_0^1 [\theta^{b}(s,x)]^2dxds\\
&\geq &\frac{\lambda}{2}\int_0^t\int_0^1 [\theta_{x}^{b}(s,x)]^2dxds
-\frac{D_4t}{2\lambda},
\end{eqnarray*}
where $ |[a(bz)/b^2]'-[\mu(bx)/b]|^2< D_4  $  and
$\lambda=\frac{l^2}{8b_1^2}   $, so
\begin{eqnarray*}
\int_0^t\int_0^1 [\theta_{x}^{b}(s,x)]^2dxds\leq
\frac{D_4t}{\lambda^2}.
\end{eqnarray*}
Therefore there exists a function  $B^{b_1}(b_2)$ satisfying
\begin{eqnarray*}
\lim\limits_{b_2\rightarrow b_1}B^{b_1}(b_2)=0,
\end{eqnarray*}
such that   for $ 0\leq t\leq T$
\begin{eqnarray}\label{eq511}
E_3&=&\int_0^t\int_{0}^{1}\{a(b_2x)/b_2^2-a(b_1x)/b_1^2\}
w(s,x)\theta_{xx}^{b_1}(t,x)dxds\nonumber\\
&\leq & B^{b_1}(b_2)+\lambda_3\int_0^t\int_0^1
w_x^2(s,x)+w^2(s,x)dxds.
\end{eqnarray}
Similarly,
\begin{eqnarray}\label{eq512}
E_4&=&\int_0^t\int_0^1 \{\mu(b_2x)/b_2-\mu(b_1x)/b_1\}w(s,x)\theta_x^{b_1}(t,x)dxds\nonumber\\
&\leq &  B_1^{b_1}(b_2)+\int_0^t\int_0^1 w^2(s,x)dxds.
\end{eqnarray}
with
\begin{eqnarray*}
\lim\limits_{b_2\rightarrow b_1}B_1^{b_1}(b_2)=0,
\end{eqnarray*}
Choosing  $\lambda_1$, $\lambda_2$ and $\lambda_3$, by
(\ref{425})-(\ref{428}), (\ref{eq511})-(\ref{eq512}), and
\begin{eqnarray*}\label{eq57}
\int_0^{t}\int_0^1w(s,x)w_t(s,x)dxds&=&\int_0^1
\frac{1}{2}w^2(t,x)dx,
\end{eqnarray*}
there exist constants $C_1$ and $C_2$ such that
\begin{eqnarray*}
\int_0^1w^2(t,x)dx\leq
C_1\int_0^t\int_0^1w^2(s,x)dxds+C_2[B_1^{b_1}(b_2)+B^{b_1}(b_2)].
\end{eqnarray*}
By setting $F(t)=\int_0^t\int_0^1w^2(s,x)dxds$ and using the
Gronwall inequality, we get
\begin{eqnarray*}
F(t)\leq C_2[B_1^{b_1}(b_2)+B^{b_1}(b_2)]\exp\{C_1t\},
\end{eqnarray*}
so
\begin{eqnarray*}
\lim\limits_{b_2\rightarrow
b_1}\int_0^t\int_0^1[\theta^{b_2}(s,x)-\theta^{b_1}(s,x)]^2dxds=0.
\end{eqnarray*}
Thus the proof is complete. \hskip 10pt $\Box$
 \vskip 10pt\noindent
{\bf Acknowledgements.}
 This work
is supported by Projects 10771114 and 11071136 of NSFC, Project
20060003001 of SRFDP, and SRF for ROCS, SEM, and the Korea
Foundation for Advanced Studies. We would like to thank the
institutions for the generous financial support. We also express our
deep thanks to Professor Huaiyu Jian, Wen'an  Yong and  Dr Zhijie
Chen for their valuable comments and suggestions. Special thanks
also go to the participants of the seminar stochastic analysis and
finance at Tsinghua University for their feedbacks and useful
conversations. \vskip 20pt \setcounter{equation}{0}

\end{document}